\newcommand{\keywords}[1]{\par\addvspace\baselineskip
\noindent\keywordname\enspace\ignorespaces#1}
\newcommand{\rmv}[1]{}
\newcommand{\Z}{{\mathbb Z}}
\newcommand{\R}{{\mathbb R}}
\newcommand{\F}{{\mathbb F}}
\newcommand{\C}{{\mathbb C}}
\newcommand{\Q}{{\mathbb Q}}
\renewcommand{\L}{{\mathcal L}}
\renewcommand{\a}{{\textbf{a}}}
\newcommand{\A}{{\mathbf A}}
\renewcommand{\P}{{\mathbf P}}
\renewcommand{\S}{{\mathbf S}}
\newcommand{\X}{{\mathbf X}}
\renewcommand{\b}{{\mathbf b}}
\newcommand{\B}{{\mathbf B}}
\begin{document}
\mainmatter
\title
{LWE from Non-commutative Group Rings}

 \author{Qi Cheng\inst{1} %
 	 \and Jun Zhang\inst{2}%
      \and Jincheng Zhuang \inst{3}%
 }
 \institute{
 School of Computer Science, University of Oklahoma\\
 Norman, OK 73019, USA.\\
 Email: {\tt qcheng@ou.edu}
 \and
 School of Mathematical Sciences, Capital Normal University\\
 Beijing 100048, China.\\
 Email: {\tt junz@cnu.edu.cn}
  \and 
 State Key Laboratory of Information Security, Institute of Information Engineering \\
 Chinese Academy of Sciences, Beijing 100093, China \\
 Email: {\tt zhuangjincheng@iie.ac.cn}
 }

\maketitle \pagestyle{plain}

\begin{abstract}
  The Ring Learning-With-Errors (LWE) problem, whose security is based on
  hard ideal lattice problems, has
  proven to be a promising primitive with diverse applications in cryptography.
  There are however recent discoveries of faster algorithms for
  the principal ideal SVP
  problem, and attempts to generalize the attack to non-principal ideals.
  In this work, we study the LWE problem on group
  rings, and build cryptographic schemes based on
  this new primitive. One can regard the LWE  on cyclotomic integers as
  a special case when the underlying group is cyclic, while our proposal
  utilizes non-commutative groups, which eliminates the weakness associated
  with the principal ideal lattices.
  In particular, we show how to build public key encryption schemes from
  dihedral group
  rings, which maintains the efficiency of the ring-LWE and improves its
  security.

\keywords{ring-LWE, Non-commutative group ring, Dihedral group ring}
\end{abstract}

\section{Introduction}
\subsection{The LWE problem}
Regev \cite{Regev05} introduced the learning with errors
(LWE) problem  as a generalization
of the classic learning parity with noise (LPN) problem. To be precise,
let $q$ be a prime, $\mathbf{s} \in\F_q^n$ be a fixed private vector, 
$\mathbf{a_i}\in\F_q^n,1\leq i\leq m$ be randomly chosen,
$e_i\in\F_q,1\leq i\leq m$ be chosen independently accordingly to an error
distribution
$\F_q\mapsto \R^+$, which is a discrete Gaussian distribution that
centers around $ 0 $ with width $  q n^{-0.5 - \epsilon}  $,
and $b_i=\langle \mathbf{a_i},\mathbf{s} \rangle + e_i$. Given a list of pairs 
$(\a_i,b_i),1\leq i\leq m$, the LWE problem asks to solve for $\mathbf{s}$,
and the LPN problem is the special case when $ q=2 $.

Informally  speaking, 
it is believed that LWE is hard in the sense that
even though $ e_i $  tends to be small,
when $ \mathbf{s} $ is hidden,  $(\mathbf{a_i},b_i)  $ can not
be distinguished from a random vector in $ \F_q^{n+1} $.
In fact, Regev \cite{Regev05} proved the hardness 
for certain parameters $q$ and error distributions
by showing quantum reductions from 
approx-SVP and approx-SIVP
problems for lattices. Later, Peikert \cite{Peikert09} showed a classical reduction from approx-SVP
to the LWE problem under more restrictive constraints. 

Lyubashevsky, Peikert, and Regev \cite{LPR10} introduced an analogous version of standard LWE over rings,
and coined it ring-LWE. Furthermore, they established the hardness of ring-LWE by 
showing the reduction from a certain ideal lattice problem to the ring-LWE problem.
The cryptography systems based on ring-LWE are much more efficient in terms of
key sizes and encryption and decryption complexity. However, the security
of systems is based on conjecturally  hard problems on ideal lattices rather than on
general lattices.

The LWE problem and ring-LWE problem have proven to be versatile primitives
for cryptographic purposes. Besides many other schemes,
these applications include public key encryption schemes proposed by 
Regev \cite{Regev05}, Peikert and Waters \cite{PeikertW08},
Peikert \cite{Peikert09}, Lindner and Peikert \cite{LindnerP11}, 
Stehl{\'{e}} and Steinfeld \cite{StehleS11},
Micciancio and Peikert \cite{MicciancioP12};
identity-based encryption (IBE) schemes proposed by
Gentry, Peikert, and Vaikuntanathan \cite{GentryPV08}, Cash, Hofheinz, Kiltz, and Peikert \cite{CashHKP10},
Agrawal, Boneh, and Boyen \cite{AgrawalBB10a,AgrawalBB10b};
fully homomorphic encryption (FHE) schemes proposed by
Brakerski and Vaikuntanathan \cite{BrakerskiV11a,BrakerskiV11b},
Brakerski, Gentry, and Vaikuntanathan \cite{BrakerskiGV12},
Fan and Vercauteren~\cite{FV12}.

\subsection{Our results}

The main contribution of the paper is to
propose a general framework of generating LWE instances from
group rings. In particular, we
demonstrate our approach by generating
LWE instances from dihedral group rings.
Recall that given a finite group $G=\{g_1,\ldots,g_n \}$ and
a commutative ring $R$, the
elements in group ring $R[G]$ are formal sums
\[ \sum_{i=1}^n r_i g_i,r_i\in R. \]
If $ R =\Z $, and we provide a $ \Z $-module homomorphism from $ \Z[G] $
to $ \R^n $ (otherwise known as an embedding), then (one-side) ideals in group rings naturally correspond to
integral lattices. We can generalize LWE to the group
ring setting. In particular, let  $ n $ be a power of two,  $ D_{2n} $
be the dihedral group of order $ 2n $, and $ r\in D_{2n} $
be an element that generates the cyclic subgroup of order $ n $,  then
we should use the ring
\[ \Z [D_{2n}]/( (r^{n/2}+1) \Z[D_{2n}]),  \]
which is also a free $ \Z $-module of rank $ n $.
Note that $  (r^{n/2}+1) \Z[D_{2n}] $ is a two-sided ideal,
thus the quotient ring is
well defined.

In ring-LWE, there are
two types of embeddings  of rings of algebraic integers
  into Euclidean spaces:
canonical embedding and coefficient embedding.
If using canonical embedding, multiplication is component-wise.
This is the main reason that the original 
ring-LWE paper preferred canonical embedding.
Nevertheless, the whole ring is embedded as a lattice
that is not self-dual,
which complicates the
implementation~\cite{Peikert16}.
Note that  the canonical embedding of cyclotomic integers is basically
the combined map:
\[ \Z [x]/(x^n+1) \hookrightarrow \C [x]/(x^n +1) \rightarrow
  \bigoplus_{0\leq k\leq n, 2\nmid k}\C[x]/(x-  e^{2\pi \sqrt{-1} k /(2n)}),  \]
where the first map is an inclusion, and the second one is an isomorphism.
A component of the canonical embedding of $ \Z[x]/(x^n+1) $ 
corresponds to a group representation of the cyclic group $ \langle x \rangle $  of order $ 2n $:
\[ \rho_k (x^j) = e^{2\pi \sqrt{-1} k j/(2n)}, 2\nmid k.  \]
If a group is not commutative, we can use irreducible group representations
to find a canonical embedding of the group ring.
However, some
irreducible representations will have dimensions larger than one,
thus  multiplication in the group ring is not component-wise
under these representations. 
  We should  use coefficient
  embedding to make implementation simpler.

There are recent discoveries
  of faster SVP algorithms
  for principal ideal lattices,  
  and attempts to generalize the idea to non-principal ideal lattices.
  See \cite{CDPR16,CDW16}
  and references therein. First observe that the ratio
  between two generators
  of a principal ideal is an integral unit.
  The main idea of the attacks comes from the Dirichlet unit theorem:
  the group of integral units in  a number
  field is a direct product
  of a finite group with a free abelian group, whose
  generators are known as fundamental units. If taking logarithms of
  complex norms of their conjugates, the units are sent to
  the so-called log-unit lattice, whose SVP is not hard in many cases.
  Nevertheless, the ring-LWE cryptosystems are not under direct threat,
  since lattice problems in ideal lattices form lower bounds for
  their security, and the approximation factors in the attack are too large.
  
  The principal ideals from non-commutative integral group rings
  do not appear to suffer from the weakness,
  since multiplications of  units may not commute~\cite{Sehgal93}.
A few remarks are in order:

\begin{enumerate}
\item The group ring LWE includes LWE on
  cyclotomic integers as a special case,
  thus has security no less than the ring-LWE.
  Indeed, the ring $R = \Z[x]/(x^n+1)  $, used in many
  ring-LWE cryptosystems, 
  is a direct summand of a group ring from $ C_{2n} $ 
( the cyclic group of order $ 2n $ ):
\[  \Z[C_{2n}] = \Z[x]/(x^{2n}-1) \equiv  \Z[x]/(x^n+1) \oplus  \Z[x]/(x^n-1)  \] 
One should avoid using the ring $ \Z [x]/(x^{2n}-1) $, as the map
\[   \Z [x]/(x^{2n}-1)  \rightarrow \Z [x]/(x-1)  \]
may leak secret information.

\item We regard  one-dimensional representations
  over finite fields as
  security risks that should be eliminated.
  Many attacks on the ring-LWE (implicitly) explores a one-dimensional
  representation that sends $ x $ to a small order
  element \cite{CLS15,CLS16,EHL14,ELOS15}, 
  for example,
  \[ \F_q[x]/(f(x)) \rightarrow \F_q [x]/(x-1),  \]
  if $ (x-1) | f(x) $ over $ \F_q $.  
\item  
  Even though  rings of algebraic integers in number fields may not
  be principal ideal domains (PID), their reductions modulo primes are always
  principal ideal rings.
  The group ring $ \F_p [G] $, however, is not necessarily a
  principal ideal rings if
  $ G $ is non-commutative. We believe that this property provides
  an extra protection against attacks.
  
\end{enumerate}

The proof of security is largely similar to the case of
ring-LWE. There is, however, an important difference:
unlike the  ring of algebraic integers
in a number field, 
group rings have ideals that are not invertible.
The security of group-ring-LWE should be based on lattice problems of 
invertible ideals.

We note that there have been attempts to use non-commutative algebraic structures,
especially the group structures, in
designing cryptographical systems \cite{MSU11}.
The approaches that relate closely to ours
include using group rings to replace $ (\Z/q\Z)[x]/(x^n-1) $  in NTRU 
\cite{YDS15,Coppersmith97,Truman07} 
 and using the learning problem
of non-commutative groups.  The former approach has no security proof
from lattice problems.
The latter approach is not based on lattice problems.

\subsection{Paper organization}
The paper is organized as follows. In Section~\ref{sec:math-prel}, we
review the mathematical background.  In
Section~\ref{sec:previous-works} we briefly discuss previous works.
In Section~\ref{sec:pkc-from-dihedral}, we propose
generating LWE instances from non-commutative group rings and
establish public key cryptosystem from  dihedral group rings.  In
Section~\ref{sec:secur-analys-new} we analyse the security of the new
approach. Section~\ref{sec:conclusion} concludes the paper. We will
not try to optimize the parameters in this paper, leaving it to future
work.

\section{Mathematical preliminary}\label{sec:math-prel}

In this section, we review the mathematical background on lattices and group rings.

\subsection{Efficiency of cryptographic schemes}
To use a cryptography algorithm, one should first establish a
security level $ n $. It is expected that the 
cryptosystem cannot be broken in
$ 2^n $ bit operations.  
In terms of efficiency, the most important
parameters for an encryption algorithm are
block size,
public/secret key sizes, cipher-text expansion factor and 
time complexity per  bit in encryption and decryption.
Ideally these parameters should have sizes that grow
slowly with the security level.

Let us first calculate the parameters for the popular public
key cryptosystem RSA, whose security is based on the 
integer factorization problem.
To factor a number of $ l $ bits, the best algorithm -- 
Number Field Sieve -- takes heuristic time at most
$ 2^{l^{1/3 + \epsilon}} $. 
Thus for security level $ n $, 
the RSA-OAEP system, a practical implementation of RSA,
should have  key size $ l = n^{3-\epsilon}$.
To encrypt a block of $ O(l) $ bits, 
it adds some padding into the message and
 computes an exponentiation modulo a number of $ l $ bits.
Thus it has cipher-text expansion $ O(1) $. 
The public exponent is small (e.g. $ e=65537 $), but
the private exponent has $ l $ bits.
Therefore, encryption takes time $ \tilde{O}(l) $ and decryption takes 
time $ \tilde{O}(l^2) $, assuming that we use the fast multiplication 
algorithm for each modular multiplication.
This results in bit complexity $  n^{3-\epsilon} $ per 
ciphertext bit  for decryption, 
and $ (\log n)^{O(1)} $ per message bit for encryption 
if using small encryption exponent.
Asymptotically the key size for RSA is not so good. However,
the $ \epsilon $  part has played an
important role in its favor when $ n $ is small.
To achieve a security level $ n= 80 $, one can use a public modulus
of size $ 1000 $  bits rather than $ 80^3= 512000 $ bits, 
although a public modulus of  $ 2000 $-bits is recommended now.

\subsection{Lattices and ring-LWE}
Given a list of
linearly independent column vectors $\B=(\b_1,\ldots,\b_n)\in \R^{n\times n}$,
the (full rank) lattice $\L(\B)$ is the set  
\[
\L(\B)=\left\{\sum_{i=1}^{n}x_i\b_i\,|\,x_i\in \Z\right\}.
\]
The determinant of the lattice is 
\[
\det(\L):=|\det(\B)|.
\]
The minimum distance of the lattice is
\[ 
\lambda_1(\L):=\min_{0\neq v\in\L}||v||
\]
where $||\cdot||$ is the Euclidean norm.
The dual lattice is
\[
\L^{*}:=\{u\in\R^n\,\vert\, \forall v\in\L,\langle u,v \rangle\in \Z\}.
\]  

\begin{definition}
Let $\L\in\R^n$ be a full rank lattice. The Shortest Vector Problem (SVP)
is to find a vector $v\in\L$ such that
\[
||v||=\lambda_1.
\]
Given a target vector $t\in \R^n$, the Closest Vector Problem (CVP) is to 
find a vector $v\in\L$ such that
\[
||v-t||\leq ||v'-t||,\forall v'\in \L.
\]
\end{definition} 

\begin{definition}
  Let $ 0< \beta < 1/2 $ be a constant, and $ \L $ be a lattice.
  Let $ y = x +e $ where $ x\in \L $,
  and $ ||e|| < \beta \lambda_1 (\L) $. Given
  $ y $, the $\beta$-BDD problem is to find $ x $.
\end{definition}

\begin{definition}
  Let $ 0< \beta < 1/2 $ be a constant, and $ \L $ be a lattice.
  Let $ y = x +e $ where $ x\in \L $,
  and $ ||e|| < \beta \lambda_1 (\L) $.
  Given
  $ y $, the $(q, \beta)$-BDD problem is
  to find any $ x' $ such that $ x \equiv x' \pmod{q \L} $.
\end{definition}

The $ \beta $-BDD problem can be reduced to $ (q, \beta) $-BDD
problem. In fact, if $ x - x' \in q \L $, then $ (x-x')/q\in \L $.  
The distance between  $ (y-x')/q $  and $ (x-x')/q $ is $ ||e/q|| $ .
So we have a new BDD problem on the same lattice but with smaller error.
Repeating the procedure will give us a BDD problem that
can be solved by lattice reduction algorithms such as LLL. 

\subsection{Dihedral groups and group rings}

Let $ G =\{g_1,g_2,\ldots,g_n \}$ be a finite group of order $ n $.
The
elements in group ring $R[G]$ are formal sums
\[ \sum_{i=1}^nr_ig_i,r_i\in R. \]
Addition is defined by
\[\sum_{i=1}^na_ig_i+\sum_{i=1}^nb_ig_i=\sum_{i=1}^n(a_i+b_i)g_i.\]
Multiplication is defined by
\begin{equation}\label{eq:mul}
(\sum_{i=1}^na_ig_i)(\sum_{i=1}^nb_ig_i)=\sum_{l=1}^n(\sum_{g_ig_j=g_l}a_ib_j)g_l.
\end{equation}

If $ R=\Z $, a (one-side) ideal of $ \Z [G] $
is mapped to a lattice, under an embedding of $ \Z [G] $  to $ \R^n $.
Here we use coefficient embedding, i.e. a group element
is sent to a unit vector in $ \Z^n $.
The whole group ring $ \Z[G] $  corresponds to $ \Z^n $.
Denote the length of a group ring element $ X $ in the Euclidean norm
under the embedding by $ ||X|| $. The following lemma shows that lengths
of group ring elements behave nicely under multiplication.

\begin{lemma}\label{lem:smallmul}
  Let $ X, Y \in \R [G] $ be two elements. Then
  \[ ||X Y|| \leq \sqrt{n} ||X||\cdot ||Y||  \]
\end{lemma}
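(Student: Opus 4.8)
The plan is to bound each coefficient of the product $XY$ separately by the Cauchy--Schwarz inequality and then sum over all coefficients. Write $X = \sum_{i=1}^n a_i g_i$ and $Y = \sum_{j=1}^n b_j g_j$ with $a_i, b_j \in \R$, so that under the coefficient embedding $\|X\|^2 = \sum_{i} a_i^2$ and $\|Y\|^2 = \sum_{j} b_j^2$. By the multiplication rule~(\ref{eq:mul}), the coefficient of $g_l$ in $XY$ is $c_l = \sum_{g_i g_j = g_l} a_i b_j$.

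First I would rewrite $c_l$ as a full-length inner product. For each fixed $l$, the condition $g_i g_j = g_l$ determines $g_j$ uniquely from $g_i$, namely $g_j = g_i^{-1} g_l$; and as $i$ runs over $\{1,\dots,n\}$ the element $g_i^{-1} g_l$ runs over all of $G$, each exactly once. Hence, writing $b_{(i,l)}$ for the coefficient of $g_i^{-1} g_l$ in $Y$, we have $c_l = \sum_{i=1}^n a_i b_{(i,l)}$, and the multiset $\{\, b_{(i,l)} : 1 \le i \le n \,\}$ is just a permutation of $\{\, b_j : 1 \le j \le n \,\}$, so $\sum_{i=1}^n b_{(i,l)}^2 = \|Y\|^2$.

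Next, applying Cauchy--Schwarz to this inner sum gives $|c_l| \le \bigl(\sum_i a_i^2\bigr)^{1/2}\bigl(\sum_i b_{(i,l)}^2\bigr)^{1/2} = \|X\|\cdot\|Y\|$ for every $l$. Summing the squares over the $n$ values of $l$ yields $\|XY\|^2 = \sum_{l=1}^n c_l^2 \le n\,\|X\|^2\|Y\|^2$, and taking square roots gives the claim.

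The argument is essentially routine; the only step needing care is the reindexing — verifying that for fixed $l$ the map $g_i \mapsto g_i^{-1} g_l$ is a bijection of $G$ — which is precisely what lets us replace $\sum_i b_{(i,l)}^2$ by $\|Y\|^2$ uniformly in $l$. This uses only cancellation in the group, not commutativity, so the estimate holds for an arbitrary finite group $G$.
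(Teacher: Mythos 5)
Your proof is correct and follows essentially the same route as the paper's: the paper also bounds each coefficient of $XY$ (i.e.\ the $\ell_\infty$ norm) by $\|X\|\cdot\|Y\|$ via Cauchy--Schwarz and then passes to the Euclidean norm at the cost of a factor $\sqrt{n}$. Your write-up just makes explicit the reindexing bijection $g_i \mapsto g_i^{-1}g_l$ that the paper leaves implicit.
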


\begin{proof}
  From  Equation~(\ref{eq:mul}), the $ l_\infty $ norm of $ X Y $ is less than
  $ |X| |Y| $ by the Cauchy-Schwarz inequality.
\end{proof}

Next, we introduce a new norm of elements in the group ring $\mathbb{R}(G)$. For any element $\mathfrak{h}=\sum_{i=1}^{n}a_ig_i\in \mathbb{R}[G]$, by the multiplication law (1), it defines a linear transformation from $\mathbb{R}^n=\mathbb{R}[G]$ to itself, denoted by $A(\mathfrak{h})$. Indeed, it corresponds the regular representation of the finite group $G$. Then we define the matrix-norm $|\mathfrak{h}|_{\rm Mat}$ of $\mathfrak{h}$ to be the square root of the norm of the matrix $A(\mathfrak{h})A(\mathfrak{h})^T$, i.e.,
\[
  |\mathfrak{h}|_{\rm Mat}=\sqrt{{\rm Norm}(A(\mathfrak{h})A(\mathfrak{h})^T)}=\sqrt{{\rm Largest\,Eigenvalue\,of\,} A(\mathfrak{h})A(\mathfrak{h})^T}.
\]

\begin{remark}
	This definition should be the right definition for ring-LWE under any given
	embedding. In particular, if the transformation matrix $A$ is diagonal, then it reduces to the case $\ell_\infty$-norm used in~\cite{LPR10} for caninocal embedding.
\end{remark}

Let $ I $ be a right ideal, the left dual of $ I $ is defined as  
\[ I^{-1} = \{ x\in \Q [G] \mid
    \forall y\in I, x y \in \Z [G]     \}  \]
  It can be verified that the left dual is a left $ \Z [G] $ module, and
\[ I \subseteq \Z [G] \subseteq I^{-1}.  \]
We call an ideal invertible if $I^{-1} I  = \Z [G] $.
If $ I $ is invertible, then $ I^{-1} $ is a left fractional ideal,
namely, there is an integer $ t $ such that $ t I^{-1} \subseteq \Z [G] $.

A dihedral group of order $ 2n $, denoted by $ D_{2n} $, is the set
\[ \{ \mathfrak{r}^i \mathfrak{s}^j \mid 0\leq i\leq n-1, 0\leq j\leq 1 \} \]   
satisfying the relations
\[ \mathfrak{r}^n = \mathfrak{s}^2 =1, \mathfrak{s}\mathfrak{r}\mathfrak{s}=\mathfrak{r}^{-1}.  \] 
In some sense, the dihedral group is the  non-commutative group that is 
the closest to the  commutative one, since the dimension of
any irreducible representation is bounded by  $ 2 $, while
commutative groups only have one-dimensional irreducible
representations.

If $ n $ is odd, there are $ (n+1)/2 $ irreducible representations for $ D_{2n} $.
Two of them are one-dimensional:
  \[ \rho_0 (\mathfrak{r}^i) =1, \rho_0(\mathfrak{s} \mathfrak{r}^j) = 1  \]
  and
  \[ \rho_1 (\mathfrak{r}^i) =1, \rho_1(\mathfrak{s} \mathfrak{r}^j) = -1.  \]
  The rest are two-dimensional: for $ 2 \leq k\leq (n+1)/2 $,
  \begin{align*}
    \rho_k (\mathfrak{r}^i) &= \begin{pmatrix}
      e^{2\pi \sqrt{-1} i (k-1)/n} & 0 \\
      0     & e^{-2\pi \sqrt{-1} i (k-1)/n} 
    \end{pmatrix},\\
     \rho_k (\mathfrak{s} \mathfrak{r}^i) &= \begin{pmatrix}
      0 & e^{2\pi \sqrt{-1} i (k-1) /n} \\
      e^{-2\pi \sqrt{-1} i (k-1)/n} & 0
    \end{pmatrix}.
  \end{align*}
  By the Wedderburn theorem,
the group ring $ \C [D_{2n}] $
can be decomposed into 
\[ \C [D_{2n}] \equiv \C \oplus \C \oplus \bigoplus_{i=2}^{ (n+1)/2}
   \C^{2\times 2},  \] 
where the first two copies of $ \C $ correspond to $ \rho_0 $ and
$ \rho_1 $, the last $ (n-1)/2 $ copies of $ 2\times 2 $ matrix
algebras corresponds to the two-dimensional representations $ \rho_i $
($2 \leq k\leq (n+1)/2  $ ).

To guarantee the hardness results of ring-LWE based on the group ring of
dihedral group, we need to study the matrix-norm of any element in $\mathbb{R}(D_{2n})$.
\begin{lemma}\label{eigenvalue}
	For any element $\mathfrak{h}=f(\mathfrak{r})+\mathfrak{s}g(\mathfrak{r})\in
  \mathbb{R}[D_{2n}]$ where
  \[ f(x)=\sum_{i=0}^{n-1}a_ix^i {\rm \ and \ } g(x)=\sum_{i=0}^{n-1}b_ix^i \]
  are two polynomials
	over $\mathbb{R}$. Then the eigenvalues of the matrix
	$A(\mathfrak{h})\cdot A(\mathfrak{h})^T$ are $(|f(\xi^i)|\pm |g(\xi^i)|)^2$
  for $i=0,1,\cdots,n-1,$ where $\xi=e^{2\pi \sqrt{-1}/n}$ is the $n$-th root of
  unity and $|*|$ is the complex norm. So the matrix-norm of $\mathfrak{h}$ is
  bounded from above by $\max\{|f(\xi^i)|+ |g(\xi^i)|\,|\,i=0,1,\cdots,n-1\}$.
\end{lemma}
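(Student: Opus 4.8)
The plan is to write the regular representation matrix $A(\mathfrak{h})$ explicitly in a basis adapted to the decomposition $\R[D_{2n}] = \R[\mathfrak{r}] \oplus \mathfrak{s}\,\R[\mathfrak{r}]$, where $\R[\mathfrak{r}] \cong \R[x]/(x^n-1)$ is the group algebra of the cyclic subgroup generated by $\mathfrak{r}$. Using the relation $\mathfrak{r}^i\mathfrak{s} = \mathfrak{s}\,\mathfrak{r}^{-i}$, left multiplication by $\mathfrak{h} = f(\mathfrak{r}) + \mathfrak{s}\,g(\mathfrak{r})$ sends $u(\mathfrak{r}) + \mathfrak{s}\,v(\mathfrak{r})$ to $\bigl(f(\mathfrak{r})u(\mathfrak{r}) + \hat g(\mathfrak{r})v(\mathfrak{r})\bigr) + \mathfrak{s}\bigl(g(\mathfrak{r})u(\mathfrak{r}) + \hat f(\mathfrak{r})v(\mathfrak{r})\bigr)$, where $\hat h(x) := h(x^{-1}) \bmod (x^n-1)$. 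Writing $C_h$ for the $n\times n$ matrix of multiplication by $h(\mathfrak{r})$ on $\R[x]/(x^n-1)$ (a circulant matrix), this gives the block form
\[
  A(\mathfrak{h}) = \begin{pmatrix} C_f & C_{\hat g} \\ C_g & C_{\hat f} \end{pmatrix}.
\]

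Next I would diagonalize all four circulant blocks simultaneously by the discrete Fourier transform. If $F$ is the unitary $n\times n$ DFT matrix, then $F C_h F^{*} = \mathrm{diag}\bigl(h(\xi^0),\dots,h(\xi^{n-1})\bigr)$, and since $f$ and $g$ have real coefficients, $\hat h(\xi^i) = h(\xi^{-i}) = \overline{h(\xi^i)}$. Conjugating $A(\mathfrak{h})$ by the unitary matrix $F\oplus F$ and then by the permutation that pairs coordinate $i$ of the first block with coordinate $i$ of the second turns $A(\mathfrak{h})$ into the block-diagonal matrix $\bigoplus_{i=0}^{n-1} M_i$ with
\[
  M_i = \begin{pmatrix} f(\xi^i) & \overline{g(\xi^i)} \\ g(\xi^i) & \overline{f(\xi^i)} \end{pmatrix}.
\]
(Equivalently, the $M_i$ can be read off the Wedderburn decomposition: up to relabeling they are the values $\rho_k(\mathfrak{h})$ of the irreducible representations, each occurring with multiplicity $\dim\rho_k$ in the regular representation.)

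The passage to $A(\mathfrak{h})A(\mathfrak{h})^T$ rests on the fact that $A(\mathfrak{h})$ is a \emph{real} matrix, so $A(\mathfrak{h})^T = A(\mathfrak{h})^{*}$; hence conjugating by a unitary $U$ preserves the spectrum of $A(\mathfrak{h})A(\mathfrak{h})^T$, because $(UAU^{*})(UAU^{*})^{*} = U(AA^{*})U^{*}$. Taking $U$ to be the composite transform above, the eigenvalues of $A(\mathfrak{h})A(\mathfrak{h})^T$ are the union over $i = 0,\dots,n-1$ of those of $M_iM_i^{*}$. A direct $2\times 2$ computation gives $\mathrm{tr}(M_iM_i^{*}) = 2\bigl(|f(\xi^i)|^2 + |g(\xi^i)|^2\bigr)$ and $\det(M_iM_i^{*}) = \bigl(|f(\xi^i)|^2 - |g(\xi^i)|^2\bigr)^2$, so its eigenvalues are $\bigl(|f(\xi^i)| \pm |g(\xi^i)|\bigr)^2$. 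This accounts for all $2n$ eigenvalues, proving the first claim; since $\bigl(|f(\xi^i)| + |g(\xi^i)|\bigr)^2 \geq \bigl(|f(\xi^i)| - |g(\xi^i)|\bigr)^2 \geq 0$, the largest one is $\max_i\bigl(|f(\xi^i)| + |g(\xi^i)|\bigr)^2$, which yields the stated bound on $|\mathfrak{h}|_{\mathrm{Mat}}$ (in fact with equality).

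I expect the only delicate point to be the interplay between the real transpose $A^T$ in the definition of $|\mathfrak{h}|_{\mathrm{Mat}}$ and the genuinely complex diagonalization by the DFT: it is precisely the reality of $\mathfrak{h}$ that lets us replace $A^T$ by $A^{*}$ and invoke unitary invariance. Everything else — the block decomposition, the circulant/DFT step, and the $2\times 2$ eigenvalue computation — is routine bookkeeping.
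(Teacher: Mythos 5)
Your proof is correct and follows essentially the same route as the paper: block-diagonalize the regular representation by a unitary change of basis, use the reality of $A(\mathfrak{h})$ to replace $A^T$ by $A^{*}$, and compute the eigenvalues of the resulting $2\times 2$ blocks (your $M_i$ are exactly the values $\rho_k(\mathfrak{h})$ of the two-dimensional irreducibles). The only difference is cosmetic: the explicit circulant/DFT conjugation you give concretely realizes the unitary isomorphism $\psi$ that the paper's sketch merely asserts, which is a welcome level of detail.
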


\begin{proof}[Sketch of Proof]
	By representation theory, 
	we have decomposition of the regular representation $\rho_{\rm reg}$:
	\[
	\begin{CD}
	\mathbb{C}[D_{2n}] @>\rho_{\rm reg}>> \mathbb{C}[D_{2n}]\\
	@V\cong V\psi V @V\cong V\psi V\\
	\oplus_i {\rm dim}(V_i)V_i @>\rho_{\rm bd}>> \oplus_i {\rm dim}(V_i)V_i,
	\end{CD}
	\]
	where $V_i$ runs over all irreducible representations of $D_{2n}$ 
	such that $\rho_{\rm bd}(g) \,(\forall g\in D_{2n})$ is block-diagonal.
	One can show the isomorphism $\psi$ is unitary, i.e., $\psi\cdot\bar{\psi}^T=I_{2n}$. Then
	\begin{align*}
	A(\mathfrak{h})\cdot A(\mathfrak{h})^T=\rho_{\rm reg}(\mathfrak{h})\cdot \overline{\rho_{\rm reg}(\mathfrak{h})}^T
	=(\psi^{-1}\cdot\rho_{\rm bd}(\mathfrak{h})\cdot\psi)\cdot \overline{(\psi^{-1}\cdot\rho_{\rm bd}(\mathfrak{h})\cdot\psi)}^T\\
	=\psi^{-1}\cdot\rho_{\rm bd}(\mathfrak{h})\cdot\psi\cdot\bar{\psi}^T\cdot\overline{\rho_{\rm bd}(\mathfrak{h})}^T\cdot\bar{\psi}^{-T}=\psi^{-1}\cdot\rho_{\rm bd}(\mathfrak{h})\cdot\overline{\rho_{\rm bd}(\mathfrak{h})}^T\cdot\bar{\psi}^{-T}.
	\end{align*}
	So $A(\mathfrak{h})\cdot A(\mathfrak{h})^T$ have the same eigenvalues as $\rho_{\rm bd}(\mathfrak{h})\cdot\overline{\rho_{\rm bd}(\mathfrak{h})}^T$. Moreover, $\rho_{\rm bd}(\mathfrak{h})\cdot\overline{\rho_{\rm bd}(\mathfrak{h})}^T$ is block-diagonal with blocks of size at most $2\times 2$. By direct computation of eigenvalues of each block, it is easy to obtain the eigenvalues of $\rho_{\rm bd}(\mathfrak{h})\cdot\overline{\rho_{\rm bd}(\mathfrak{h})}^T$ are $(|f(\xi^i)|\pm |g(\xi^i)|)^2$ for $i=0,1,\cdots,n-1$. And hence eigenvalues of the matrix
	$A(\mathfrak{h})\cdot A(\mathfrak{h})^T$ are $(|f(\xi^i)|\pm |g(\xi^i)|)^2$ for $i=0,1,\cdots,n-1$.
	
\end{proof}

\begin{lemma}
	For any invertible (right) ideal $I$ of $\mathbb{Z}[D_{2n}]$, let $I^{-1}$ be the left inverse of $I$. Let $\Lambda$ and 
	$\Lambda^{-1}$ be the lattices defined by coefficients embedding of $I$ and 
	$I^{-1}$ respectively. Then $\Lambda^*$ and $\Lambda^{-1}$ are the same under
	a permutation of coordinates. 
\end{lemma}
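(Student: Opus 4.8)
The permutation of coordinates in play is the one induced by group inversion, so the plan is to make it explicit and then reduce the statement to an identity between the left dual ideal and the dual lattice. Concretely, I would let $\tau$ denote the $\Q$-linear map on $\Q[D_{2n}]$ determined by $\tau(g)=g^{-1}$ for $g\in D_{2n}$; it is an involutive anti-automorphism ($\tau(xy)=\tau(y)\tau(x)$ and $\tau\circ\tau=\mathrm{id}$), it fixes the identity element, and under coefficient embedding it acts on $\R^{2n}$ as the coordinate permutation swapping the coordinate indexed by $g$ with that indexed by $g^{-1}$. With this notation the lemma amounts to the assertion $\Lambda^*=\tau(\Lambda^{-1})$, i.e.\ that the dual of $\Lambda$ is the coefficient embedding of $\tau(I^{-1})=\{\tau(x):x\in I^{-1}\}$. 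I would dispose of the ``these are really full-rank lattices'' point first: $\Z[D_{2n}]\subseteq I^{-1}$ holds for every right ideal, and invertibility makes $I^{-1}$ a fractional ideal (hence bounded), so $\Lambda^{-1}$---and then $\Lambda$---has full rank $2n$; this is the only place invertibility is used.

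The engine of the proof is the observation that the standard inner product on $\R^{2n}$ is exactly the symmetric ``identity-coefficient'' pairing of the group ring. Writing $\varepsilon(z)$ for the coefficient of the identity element in $z\in\Q[D_{2n}]$, I would verify the elementary identity
\[
  \langle u,v\rangle=\varepsilon\bigl(u\,\tau(v)\bigr)=\varepsilon\bigl(\tau(v)\,u\bigr),\qquad u,v\in\Q[D_{2n}],
\]
which holds because in the expansion $\sum_{g,h}u_g v_h\,gh^{-1}$ the identity occurs precisely when $g=h$. It follows that $u\in\Lambda^*$ iff $\varepsilon(\tau(v)u)\in\Z$ for all $v\in I$. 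Since $\tau$ carries the right ideal $I$ to a \emph{left} ideal $J:=\tau(I)$, this rewrites as $\Lambda^*=\{\,u\in\Q[D_{2n}]:\varepsilon(wu)\in\Z\ \text{for all}\ w\in J\,\}$.

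The next step is to promote ``the identity-coefficient of $wu$ is integral'' to ``every coefficient of $wu$ is integral.'' For $h\in D_{2n}$ the coefficient of $h$ in $wu$ equals $\varepsilon\bigl((h^{-1}w)u\bigr)$, and $h^{-1}w$ again lies in the left ideal $J$; hence the displayed description of $\Lambda^*$ coincides with $\{\,u:wu\in\Z[D_{2n}]\ \text{for all}\ w\in J\,\}$. Applying $\tau$ once more and using $\tau(J)=I$, this condition says $\tau(u)\,y\in\Z[D_{2n}]$ for all $y\in I$, i.e.\ $\tau(u)\in I^{-1}$. Therefore $u\in\Lambda^*\iff\tau(u)\in I^{-1}\iff u\in\tau(I^{-1})$, which is exactly $\Lambda^*=\tau(\Lambda^{-1})$.

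I do not expect a genuine obstacle here: once the identity-coefficient pairing is set up, the rest is bookkeeping. The two places demanding care are (i) keeping track that $\tau$ interchanges left and right ideals---this is precisely why it is the \emph{left} inverse $I^{-1}$ of the \emph{right} ideal $I$ that arises naturally---and (ii) the passage from ``identity-coefficient integral'' to ``all coefficients integral,'' which is the only point where the ideal structure of $J$ is really used and which would fail for an arbitrary sublattice. None of this is special to $D_{2n}$, so the same argument yields the corresponding statement for $\Z[G]$ with $G$ any finite group.
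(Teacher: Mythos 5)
Your proof is correct, and at bottom it runs on the same engine as the paper's: the coordinate permutation is group inversion, and integrality of one inner product is promoted to integrality of all coefficients by using that the ideal is closed under multiplication by group elements. The paper carries this out concretely for $D_{2n}$: its permutation $(z_0,z_1,\cdots,z_{n-1})=(x_0,x_{n-1},\cdots,x_1)$ on the $\mathfrak{r}$-coordinates with the $\mathfrak{s}\mathfrak{r}^j$-coordinates fixed is exactly your $\tau$ (since $\tau(\mathfrak{r}^i)=\mathfrak{r}^{n-i}$ while $\tau(\mathfrak{s}\mathfrak{r}^j)=\mathfrak{r}^{-j}\mathfrak{s}=\mathfrak{s}\mathfrak{r}^j$), and it then expands the products coefficient by coefficient and uses closure of $I$ under right multiplication by $\mathfrak{r}^{-a}$ and $\mathfrak{s}\mathfrak{r}^{b}$ to match the two families of integrality conditions; your identity $\langle u,v\rangle=\varepsilon\bigl(\tau(v)u\bigr)$ and your passage from $\varepsilon(wu)\in\Z$ to $wu\in\Z[D_{2n}]$ via $h^{-1}w\in\tau(I)$ are precisely the coordinate-free versions of those two steps. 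What your packaging buys is a shorter, index-free argument in which the role of the (right-)ideal property is isolated and which visibly works for $\Z[G]$ with an arbitrary finite group $G$; what the paper's computation buys is the permutation written out explicitly in the coordinates actually used for the embedding. Two points of housekeeping if you write yours up: justify that elements of $\Lambda^{*}$ may be taken in $\Q[D_{2n}]$ (immediate once $\Lambda\subseteq\Z^{2n}$ has full rank, so the dual has a rational basis), and note that, exactly as in the paper, invertibility of $I$ enters only to guarantee that $\Lambda$ and $\Lambda^{-1}$ are genuine full-rank lattices --- the duality identity itself holds for any full-rank right ideal.
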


\begin{proof}
	For any $(x_0,x_1,\cdots,x_{n-1})\in \mathbb{Q}^n$, let 
	\[
	(z_0,z_1,\cdots,z_{n-1})=(x_0,x_{n-1},x_{n-2}\cdots,x_{1}).
	\]
	We claim that 
	\[
	(x_0,x_1,\cdots,x_{n-1},y_0,y_1,\cdots,y_{n-1})\in \Lambda^{-1}
	\]
	if and only if
	\[
	(z_0,z_1,\cdots,z_{n-1},y_0,y_1,\cdots,y_{n-1})\in \Lambda^{*}.
	\]	
	And hence, we finish the proof.
	
	On one hand, if $\sum_{i=0}^{n-1}x_i \mathfrak{r}^i+\sum_{j=0}^{n-1}y_j\mathfrak{s}\mathfrak{r}^j\in I^{-1}$, then 
	\[
	(\sum_{i=0}^{n-1}x_i \mathfrak{r}^i+\sum_{j=0}^{n-1}y_j\mathfrak{s}\mathfrak{r}^j)(\sum_{k=0}^{n-1}w_k \mathfrak{r}^k+\sum_{l=0}^{n-1}v_l\mathfrak{s}\mathfrak{r}^l)\in \mathbb{Z}[D_{2n}]
	\]
	for any $\sum_{k=0}^{n-1}w_k \mathfrak{r}^k+\sum_{l=0}^{n-1}v_l\mathfrak{s}\mathfrak{r}^l\in I$. Expending the product, this is equivalent to that for any $a,b=0,1,\cdots,n-1$,
	\[
	\sum_{i=0}^{n-1}x_{i}w_{a-i\mod n}+\sum_{j=0}^{n-1}y_jv_{a+j\mod n}\in \mathbb{Z},
	\]
	and
	\[
	\sum_{i=0}^{n-1}x_{i}v_{b+i\mod n}+\sum_{j=0}^{n-1}y_jw_{b-j\mod n}\in \mathbb{Z}.
	\]
	So $\sum_{i=0}^{n-1}x_i \mathfrak{r}^i+\sum_{j=0}^{n-1}y_j\mathfrak{s}\mathfrak{r}^j\in I^{-1}$ if and only if for any $\sum_{k=0}^{n-1}w_k \mathfrak{r}^k+\sum_{l=0}^{n-1}v_l\mathfrak{s}\mathfrak{r}^l\in I$ and
	for any $a,b=0,1,\cdots,n-1$,
	\[
	\sum_{i=0}^{n-1}z_{i}w_{a+i\mod n}+\sum_{j=0}^{n-1}y_jv_{a+j\mod n}\in \mathbb{Z},
	\]
	and
	\[
	\sum_{i=0}^{n-1}z_{i}v_{b-i\mod n}+\sum_{j=0}^{n-1}y_jw_{b-j\mod n}\in \mathbb{Z}.
	\]
	
	On the other hand,  we have $$(z_0,z_1,\cdots,z_{n-1},y_0,y_1,\cdots,y_{n-1})\in \Lambda^{*}$$ if and only if for any $\sum_{k=0}^{n-1}w_k \mathfrak{r}^k+\sum_{l=0}^{n-1}v_l\mathfrak{s}\mathfrak{r}^l\in I$,
	\[
	\sum_{i=0}^{n-1}z_iw_i+\sum_{j=0}^{n-1}y_jv_j\in \mathbb{Z}.
	\]
	Note that $I$ is a right ideal of $\mathbb{Z}[D_{2n}]$, so for any $a,b=0,1,\cdots,n-1$,
	\[
	(\sum_{k=0}^{n-1}w_k \mathfrak{r}^k+\sum_{l=0}^{n-1}v_l\mathfrak{s}\mathfrak{r}^l)\mathfrak{r}^{-a}=\sum_{k=0}^{n-1}w_{k+a\mod n} \mathfrak{r}^k+\sum_{l=0}^{n-1}v_{l+a\mod n}\mathfrak{s}\mathfrak{r}^l\in I
	\]
	and 
	\[
	(\sum_{k=0}^{n-1}w_k \mathfrak{r}^k+\sum_{l=0}^{n-1}v_l\mathfrak{s}\mathfrak{r}^l)\mathfrak{s}\mathfrak{r}^{b}=\sum_{k=0}^{n-1}v_{b-k} \mathfrak{r}^k+\sum_{l=0}^{n-1}w_{b-k}\mathfrak{s}\mathfrak{r}^l\in I.
	\]
	So we have $$(z_0,z_1,\cdots,z_{n-1},y_0,y_1,\cdots,y_{n-1})\in \Lambda^{*}$$ if and only if for any $\sum_{k=0}^{n-1}w_k \mathfrak{r}^k+\sum_{l=0}^{n-1}v_l\mathfrak{s}\mathfrak{r}^l\in I$ for any $a,b=0,1,\cdots,n-1$,
	\[
	\sum_{i=0}^{n-1}z_{i}w_{a+i\mod n}+\sum_{j=0}^{n-1}y_jv_{a+j\mod n}\in \mathbb{Z},
	\]
	and
	\[
	\sum_{i=0}^{n-1}z_{i}v_{b-i\mod n}+\sum_{j=0}^{n-1}y_jw_{b-j\mod n}\in \mathbb{Z}.
	\]
	
	So the claim is proved.
\end{proof}

To eliminate the influence of one-dimensional representations,
one can let $ n $ be a prime, and  use the direct summand of the ring $ \Z[D_{2n}] $:
  \[  \Z[D_{2n}]/ ((\mathfrak{r}^{n-1} + \mathfrak{r}^{n-2} + \cdots + 1) \Z[D_{2n}]). \]
  Note that $ (\mathfrak{r}^{n-1} +\mathfrak{r}^{n-2} + \cdots + 1) \Z[D_{2n}] $ is a
  two-sided ideal, so the above ring is well defined,
  and it can be regarded as a projection of $ \Z[D_{2n}] $ to
  $ \bigoplus_{i=2}^{ (n+1)/2}
   \C^{2\times 2} $. 
In this paper we assume that
 $ n $ is a power of two, and let
  \[ {\mathbf R} = \Z[D_{2n}]/ ((\mathfrak{r}^{n/2}+1) \Z[D_{2n}]), \]
  which is also without  one-dimensional component. Denote
  \[
     {\mathbf R}_\mathbb{R}={\mathbf R}\otimes_{\mathbb{Z}} \mathbb{R}
  \]
  which is $\mathbb{R}^n$ under coefficients embedding, and let
$ \mathbb{T}={\mathbf R}_\mathbb{R}/{\mathbf R}$.
 
  Let $ q $ be a prime such that $ \gcd(q,2n)=1 $. Define  
  \[ {\mathbf R}_q = \F_q [D_{2n}]/ ((\mathfrak{r}^{n/2} + 1) \F_q [D_{2n}]). \]
   

  \begin{definition}
    Let $ \chi_{\alpha_1, \alpha_2, \cdots, \alpha_n} $ be a Gauss distribution
    in $ \R^n $ such that
    \[   \chi_{\alpha_1, \alpha_2, \cdots, \alpha_n} (x_1, x_2, \cdots, x_n)
    =   e^{-\pi ((x_1/\alpha_1)^2 + (x_2/\alpha_2)^2 + \cdots + (x_n/\alpha_n)^2)}\] 
  Let $\Psi_{\leq \alpha}$ be the set of all the Gaussian distributions
  $ \chi_{ \alpha_1, \alpha_2, \cdots, \alpha_n} $ such that $\alpha_i \leq
  \alpha$
  for all $ 1\leq i \leq n $ .  
  The $ {\mathbf R}_q $-LWE problem is to find the secret $ s \in {\mathbf R}_q$,  given
  a sequence of $ (a_i, b_i)\in {\mathbf R}_q \times \mathbb{T}$,
  where $ a_i $ is selected uniformly and independently
  from $ {\mathbf R}_q $, $ b_i = (a_i s)/q + e_i \mod {\mathbf R}$,
  $ e_i $ is selected independently according to some fixed distribution $ \chi\in \Psi_{\leq \alpha} $.
 \end{definition} 

  
\begin{remark}
  
  Not every ideal is invertible.
  For example, $ 1+\mathfrak{s} \in {\mathbf R} $
  generates an ideal that is not invertible.
  It is very important to have an ideal that is invertible
  in order to  have hard lattice problems.
  In the later proof, we need an onto $ {\mathbf R} $-module
  morphism 
  $ I \rightarrow {\mathbf R}_q $, which requires $ I $ to be invertible.
  \end{remark}

  \begin{lemma}
    The element $ \sum_{0\leq i\leq (n/2)-1} a_i \mathfrak{r}^i  +
    \sum_{0\leq i\leq (n/2)-1} b_i \mathfrak{s} \mathfrak{r}^i  \in {\mathbf{R}}$
    is invertible in $ {\mathbf R} \otimes \Q$
    iff for all odd $1 \leq k\leq n/2  $,
    \[ | \sum_{0\leq i\leq (n/2)-1} a_i
    e^{ 2\pi \sqrt{-1} k i/n} | - | \sum_{0\leq i\leq (n/2)-1} b_i
    e^{ 2\pi \sqrt{-1} k i/n} | \not= 0,  \]
  where $ |*| $ is the complex norm. 
  \end{lemma}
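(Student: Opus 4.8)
The plan is to reduce the statement to a computation in the complex group algebra, where Wedderburn's theorem makes unit-ness transparent, and then to descend from $\C$ back to $\Q$. Throughout write $\omega = e^{2\pi\sqrt{-1}/n}$ and assume $n\ge 4$ is a power of two. First I would identify
\[
  {\mathbf R}\otimes_{\Z}\C \;=\; \C[D_{2n}]/\bigl((\mathfrak{r}^{n/2}+1)\,\C[D_{2n}]\bigr)
\]
with a direct sum of matrix algebras: by the Wedderburn decomposition of $\C[D_{2n}]$, a simple factor descends to this quotient precisely when $\mathfrak{r}^{n/2}$ acts on it as $-I$. The one-dimensional representations send $\mathfrak{r}$ to $\pm1$, hence $\mathfrak{r}^{n/2}$ to $1\neq -1$, so none of them survive; among the two-dimensional ones
\[
  \rho_k(\mathfrak{r}) = \begin{pmatrix}\omega^k & 0\\ 0 & \omega^{-k}\end{pmatrix},\qquad
  \rho_k(\mathfrak{s}) = \begin{pmatrix}0 & 1\\ 1 & 0\end{pmatrix}\qquad (1\le k\le n/2-1),
\]
we have $\rho_k(\mathfrak{r}^{n/2}) = (-1)^k I$, so exactly those with $k$ odd survive. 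A dimension count ($4$ per surviving factor, $n/4$ of them) gives ${\mathbf R}\otimes\C \cong \bigoplus_{k\ \mathrm{odd}} \C^{2\times 2}$ with no one-dimensional part, and in such a ring an element is a unit iff each of its $2\times 2$ blocks is invertible.

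Next I would compute the block of $\mathfrak{h} = f(\mathfrak{r}) + \mathfrak{s}\,g(\mathfrak{r})$. Since $\rho_k(\mathfrak{r})$ is diagonal, $\rho_k(f(\mathfrak{r})) = \mathrm{diag}\bigl(f(\omega^k),f(\omega^{-k})\bigr)$, while $\rho_k(\mathfrak{s}\,g(\mathfrak{r})) = \rho_k(\mathfrak{s})\cdot\mathrm{diag}\bigl(g(\omega^k),g(\omega^{-k})\bigr)$ is that diagonal matrix with its two rows swapped; adding them,
\[
  \rho_k(\mathfrak{h}) = \begin{pmatrix} f(\omega^k) & g(\omega^{-k})\\ g(\omega^k) & f(\omega^{-k})\end{pmatrix},\qquad
  \det\rho_k(\mathfrak{h}) = f(\omega^k)f(\omega^{-k}) - g(\omega^k)g(\omega^{-k}).
\]
Because the $a_i,b_i$ are real and $\omega^{-k} = \overline{\omega^k}$, we get $f(\omega^{-k}) = \overline{f(\omega^k)}$ and likewise for $g$, so $\det\rho_k(\mathfrak{h}) = |f(\omega^k)|^2 - |g(\omega^k)|^2 = \bigl(|f(\omega^k)|+|g(\omega^k)|\bigr)\bigl(|f(\omega^k)|-|g(\omega^k)|\bigr)$. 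This vanishes iff $|f(\omega^k)| = |g(\omega^k)|$, i.e. iff $|f(\omega^k)| - |g(\omega^k)| = 0$ (the first factor can only vanish together with the second). Running $k$ over the odd integers in $[1,n/2-1]$ — equivalently in $[1,n/2]$, since $n$ is a power of two — yields the stated criterion for $\mathfrak{h}$ to be a unit in ${\mathbf R}\otimes\C$.

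Finally I would bridge $\C$ down to $\Q$: in a finite-dimensional algebra $A$ over a field, an element is a unit iff left multiplication by it is a bijective linear map iff the determinant of that map is nonzero, and this determinant is a polynomial in the coordinates of the element that is unaffected by extending scalars from $\Q$ to $\C$ (through the decomposition above it equals $\prod_{k\ \mathrm{odd}}(\det\rho_k(\mathfrak{h}))^2$, recovering the same condition). Hence $\mathfrak{h}$ is invertible in ${\mathbf R}\otimes\Q$ iff it is invertible in ${\mathbf R}\otimes\C$, which finishes the proof. The only step demanding real care is the first: correctly pinning down which irreducibles of $D_{2n}$ descend through the quotient by $(\mathfrak{r}^{n/2}+1)$ and fixing the indexing of the $\rho_k$, together with the routine observation that $\{\mathfrak{r}^i,\ \mathfrak{s}\mathfrak{r}^i : 0\le i < n/2\}$ is a $\Z$-basis of ${\mathbf R}$, so that writing $\mathfrak{h} = f(\mathfrak{r}) + \mathfrak{s}g(\mathfrak{r})$ with $\deg f,\deg g < n/2$ captures the general element. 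The matrix and determinant computations and the base-change remark are then immediate.
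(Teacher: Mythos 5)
Your proposal is correct, and it is a self-contained elaboration of what the paper compresses into one line. The paper simply says the lemma ``is easy from Lemma~2'' (the eigenvalue lemma), i.e.\ it reads off invertibility from the fact that the eigenvalues of $A(\mathfrak{h})A(\mathfrak{h})^T$ are $(|f(\xi^i)|\pm|g(\xi^i)|)^2$, with only the odd-index components surviving in $\mathbf{R}$; invertibility of $\mathfrak{h}$ is then nonvanishing of all relevant singular values of the regular representation. You instead bypass that lemma and work directly with the Wedderburn decomposition of $\mathbf{R}\otimes\mathbb{C}$ into the $2\times 2$ blocks at odd $k$, compute $\det\rho_k(\mathfrak{h})=|f(\omega^k)|^2-|g(\omega^k)|^2$, and then descend from $\mathbb{C}$ to $\mathbb{Q}$ via the determinant of left multiplication (which, as you note, is a rational polynomial invariant under scalar extension, and one-sided invertibility suffices in a finite-dimensional algebra). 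The two routes rest on the same representation-theoretic decomposition—indeed the paper's eigenvalue lemma is itself proved by block-diagonalizing the regular representation—but your version buys two things the paper leaves implicit: an explicit identification of exactly which irreducibles survive the quotient by $(\mathfrak{r}^{n/2}+1)$ (only the $\rho_k$ with $k$ odd, no one-dimensional components), and a clean justification that invertibility over $\mathbb{Q}$ and over $\mathbb{C}$ coincide, which the paper's appeal to a real/complex eigenvalue statement silently assumes. Your observation that $|f|+|g|=0$ forces $|f|-|g|=0$, so the criterion can be stated with the difference alone, is the same reduction the paper's phrasing relies on.
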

\begin{proof}
	It is easy from Lemma~\ref{eigenvalue}.
\end{proof}


\section{Previous works}\label{sec:previous-works}
Lattice-based cryptography has attracted much attention recently.
It has a few advantages over classical number theoretic cryptosystems
such as RSA or Diffie-Hellman. First, it resists quantum  attacks, 
in contrast to the traditional hard problems such as
integer factorization, or discrete logarithms \cite{Shor94}.
Second, it enjoys the  worst case to the average case reduction,
 shown in the pioneering work of Ajtai \cite{Ajtai96}. 
Third, computation can be done on small numbers. No large number
exponentiations 
are needed, which tend to slow down the other public key cryptosystems.
It does have a major drawback in key sizes.
The NTRU cryptosystem \cite{HoffsteinPS98} is the first
successful cryptosystem based on lattices.

\subsection{Regev's scheme}

Regev \cite{Regev05} introduced the Learning With Errors (LWE) 
problem  as a generalization
of the classic learning parity with noise (LPN) problem  to higher moduli
and proposed a public key encryption system based on the LWE problem.
In the following description of Regev's scheme, $ n $ is the security parameter,
$  q \in [n^2, 2n^2] $ is a prime number and $ m = O( n\log q) , \alpha = o(\frac{1}{\sqrt{n}\log n})$.

The distribution $\Psi_\alpha = \chi_\alpha \pmod{\Z}$ is defined to be a
normal distribution on $ R/\Z$
with mean $0$ and standard deviation $\frac{\alpha}{\sqrt{2\pi}}$.
And $\bar{\Psi}_\alpha$ is the discrete distribution of the random variable
$\lfloor q\cdot \X \rceil \mod q$ over $\F_q$, where $a \mod b = a - \lfloor a/b \rfloor b $
and $\X$ is from the distribution $\Psi_{\alpha}$.

\begin{itemize}
\item {\bf Private key:} Choose a random $ \mathbf{s}\in (\Z/q\Z)^n $ uniformly.
\item {\bf Public key:} Choose  a random matrix
$ \A \in (\Z/q\Z)^{n\times m } $  uniformly. Choose an  error vector
$\mathbf{x} $  from $ (\Z/q\Z)^m $, where each component of $\mathbf{x}$ is chosen according to the distribution
$ \bar{\Psi}_{\alpha} $. Announce the public key $(\A, \P) $  where 
 $ \P \in  (\Z/q\Z)^m $ should be calculated as $ \mathbf{s} \A   + \mathbf{x} $.  
\item {\bf Encryption:} First select a random vector $ \mathbf{e}^{T} \in \{0,1 \}^m $. 
 For a message bit $ v \in \{0, 1\} $, the encryption is
$ (\A \mathbf{e}, v \lfloor \frac{q}{2}  \rfloor + \P \mathbf{e} )  $.
\item {\bf Decryption:} For the cipher-text $ (\mathbf{a},b) $, output $ 0 $  
if $  b - \langle \mathbf{a},\mathbf{s} \rangle $ 
 is closer to $ 0 $  than to $ q/2 $; Otherwise de-crypt to $ 1 $. 
\end{itemize}

For security level $ n $, the private key has $ \tilde{O}(n)  $  bits.
The public key has $ \tilde{O}(n^2) $ bits, and can be reduced to
$ \tilde{O}(n) $.  
The cipher-text expansion is $ \tilde{O}( n)  $.
The encoding and decoding complexity is $ \tilde{O}( n^2) $ per bit.  
Hence this system is not efficient, especially in terms of
cipher-text expansion and encryption/decryption complexity.

To find the private key from the public key, one can solve
a CVP problem in the lattice $ \L = \{ v \A  \mid v \in (\Z/q\Z)^n \} $,
which is a sub-lattice of $ q \Z^m $. Note that $ q^{m-n} \mid \det(\L). $  
The shortest vector of $ \L $ has length $ \tilde{O}(q\sqrt{m}). $ 
This means that the secret key is likely unique.

\subsection{PVW improvement}

Peikert, Vaikuntanathan, and Waters \cite{PVW08} proposed a more efficient system based on LWE.
They made two important changes: first the secret and the error
in the public key are matrices,
and the message space consists of vectors;
secondly  the alphabet of the message is $ \Z/p\Z $ for some $ p $
that may be greater than $ 2 $.
The latter idea has also been utilized by Kawachi, Tanaka, and Xagawa \cite{KTX07}
to improve the efficiency of several single-bit cryptosystems based on lattice problems. 

Suppose that $ p = poly(n) $, $ l = poly(n) $, $m=O(n\log n)$, $\alpha = 1/(p\sqrt{m}\log n)$  and $ q>p $  is a prime.
Let  $ t $  be a function from 
$ \Z/p\Z $ to $ \Z/q\Z $ 
defined by $ t(x) = [ x \times \frac{q}{p} ] $ and extended
to act component-wise on vector spaces over $ \Z/p\Z $.

\begin{itemize}
\item {\bf Private key:} Choose a random matrix
$ \S\in (\Z/q\Z)^{n\times l} $ uniformly.
\item {\bf Public key:} Choose a random matrix
$ \A \in (\Z/q\Z)^{n\times  m } $ uniformly.
Find an   error matrix $ \X \in (\Z/q\Z)^{ l\times  m } $
where each entry is chosen independently according to 
the error distribution $ \chi = \bar{\Psi}_{\alpha} $. The public key is $ (\A, \P) $ 
where $ \P=\S^T \A + \X \in (\Z/q\Z)^{l \times  m } $. 
\item {\bf Encryption:} 
 The message  is assumed to be a vector  $ \mathbf{v} \in  (\Z/p\Z)^l $. First
convert it to a vector  $ t(\mathbf{v}) $ in  $   (\Z/q\Z)^l $. 
Then select $ \mathbf{e}^T \in \{0,1\}^m $ uniformly at random.
The encryption is $ (\A \mathbf{e}, \P \mathbf{e} + t(\mathbf{v})) \in (\Z/q\Z)^n \times (\Z/q\Z)^l  $.
\item {\bf Decryption:} For the cipher-text $ (\mathbf{u},\mathbf{c}) $, 
compute $ \mathbf{d} = \mathbf{c} - \S^T \mathbf{u} $, and 
output  $ \mathbf{v} \in (\Z/p\Z)^l $, where $ v_i $ is the element in $ \Z/p\Z $   
that makes $ d_i - t(v_i) $ closest to $ 0 \pmod{q} $. 

\end{itemize}

Note that one may set $ l=n $ in the cryptosystem. 
In this case, the public key size and secure key size are $ \tilde{O}(n^2) $. 
The algorithm has cipher-text expansion $ O(1) $. 
The encryption and decryption complexity is $ \tilde{O}(n) $ per bit. 

The security of the cryptosystem comes from the fact that 
if $ \S $ is hidden, the public key $ (\A,\P) $
is computationally indistinguishable from uniform distribution
over $(\Z/q\Z)^{n\times  m } \times (\Z/q\Z)^{l \times  m }  $,
for suitable parameters, under the hypothesis that LWE is hard.

\subsection{PKC based on ideal lattices}

To improve the efficiency of the LWE-based system, Lyubashevsky, Peikert, and
Regev \cite{LPR10} proposed the primitive of ring-LWE.
Let $R = \Z[x]/(x^n+1)  $, where $ n $  is a power of two.
Let $ R_q = (\Z/q\Z)[x]/(x^n+1) $. 

\begin{itemize}
\item {\bf Private key:} The private key is $ s,e \in R_q $ from an error 
distribution.
\item {\bf Public key:} Select a random $ a \in R_q $ uniformly.
Output $ (a, b) \in R_q^2 $, where $ b = as + e $. 
\item {\bf Encryption:} To encrypt a bit string $ z  $ of length $ n $,
we view it as an element in $ R_q $ so that bits in $ z $  become coefficients
of a polynomial. The cipher-text is $ (u,v) $ obtained by
\[ u = a r + e_1, v = b r + e_2 + \lfloor q/2 \rfloor z, \] 
where $  r, e_1, e_2 $ are chosen from an error distribution.
\item {\bf Decryption:} For cipher-text $ (u,v) $, computes
$ v - us $, which equals
\[  (r e - s e_1 - e_2) +  \lfloor q/2 \rfloor z. \]
One can read $ z $ from $ v-us $, since $ r, e, e_1 $ and $ e_2 $ 
have small coefficients.
\end{itemize}

The algorithm is very efficient.
Public and private key size is $ \tilde{O}(n) $. Cipher-text expansion is $ O(1) $,
and encryption/decryption complexity per bit is $ (\log n)^{O(1)} $, assuming
that we use the fast multiplication algorithm.
The parameters are optimal asymptotically, however,
 the security is based on
approx-SVP of ideal lattices, rather than general lattices.

\section{PKC from  dihedral group rings  }\label{sec:pkc-from-dihedral}

In this section, we describe a cryptosystem based on the dihedral group ring.
The protocol is identical to one based on the ideal lattice,
except that since multiplication is not commutative, one needs
to pay attention to the order of multiplication.
The discretization $\bar{\chi}\,:\,\Z/q\Z\rightarrow \mathbb{R}$ of a Gaussian $\chi$ on $\mathbb{R}$. First, reduce $\chi$ by modulo $\Z$ to obtain a distribution $\chi\mod\Z$ on $[0,1)$. Then divide $[0,1)$ into $q$ parts $[1-1/2q,1)\cup [0,1/2q)$, $[1/2q,3/2q),\,\cdots,$ and $[1-3/2q,1-1/2q)$, and integrate the distribution ($\chi\mod\Z$) on each part to define $\bar{\chi}(0),\bar{\chi}(1),\cdots,\bar{\chi}(q-1)$.

Let $ n $ be a power of two, let $ q $ be a prime such that $ \gcd(q,2n)=1 $,
and $ q \in [n^2, 2n^2] $. 
 Recall
  \[ {\mathbf R} = \Z[D_{2n}]/ ((\mathfrak{r}^{n/2}+1) \Z[D_{2n}]), \]
  \[ {\mathbf R}_\mathbb{R} = \mathbb{R}[D_{2n}]/ ((\mathfrak{r}^{n/2}+1) \mathbb{R}[D_{2n}]), \]
  \[ {\mathbf R}_q = \F_q [D_{2n}]/ ((\mathfrak{r}^{n/2}+1) \F_q [D_{2n}]), \]
  and the error distribution $ \bar{\chi} $  on $ {\mathbf R}_q $
  is to select coefficients 
  independently according to the discretization of a Guassian of width $ \tilde{O}(1/\sqrt{n}) $.  
\begin{itemize}
\item {\bf Private key:} The private key is $ s,e \in {\mathbf R}_q$ from the error 
distribution.
\item {\bf Public key:} Select a random $ a \in {\mathbf R}_q $ uniformly.
Output $ (a, b) \in {\mathbf R}_q^2 $, where $ b = s a + e $. 
\item {\bf Encryption:} To encrypt a bit string $ z  $ of length $ n $,
we view it as an element in $ {\mathbf R}_q $ so that bits in $ z $  become coefficients
of a polynomial. The cipher-text is $ (u,v) $ obtained by
\[ u = a r + e_1, v = b r + e_2 + \lfloor q/2 \rfloor z, \] 
where $  r, e_1, e_2 $ are chosen from an error distribution.
\item {\bf Decryption:} For cipher-text $ (u,v) $, one computes
$ v - s u $, which equals
\[  (r e - s e_1 - e_2) +  \lfloor q/2 \rfloor z. \]
One can read $ z $ from $ v-us $, since $ r, e, e_1 $ and $ e_2 $ 
have small coefficients.
\end{itemize}

One can verify that the public and private key sizes are
linear in the security level, and
the ciphertext expansion is almost a constant.
The following theorem shows that the encryption/decryption
complexity is logarithmic per bit. 

\begin{theorem}
  The multiplication in $ (\Z/q \Z ) [D_{2n}] $  can be done in $ \tilde{O} (n \log q) $ time. 
\end{theorem}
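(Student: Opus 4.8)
The plan is to reduce a multiplication in $(\Z/q\Z)[D_{2n}]$ to a fixed number of cyclic convolutions of length $n$ over $\Z/q\Z$, and then to realise each such convolution by a single fast integer multiplication.

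First I would use the standard normal form for the dihedral group ring: every element of $(\Z/q\Z)[D_{2n}]$ can be written uniquely as $f(\mathfrak{r})+\mathfrak{s}\,g(\mathfrak{r})$ with $f,g\in(\Z/q\Z)[x]/(x^n-1)$, and this decomposition is just a relabelling of the $2n$ coefficients, costing $O(n)$. Conjugation by $\mathfrak{s}$ acts on the cyclic part by $\mathfrak{r}\mapsto\mathfrak{r}^{-1}$; writing $h^{*}(x):=h(x^{-1})\bmod(x^n-1)$, so that $\mathfrak{s}\,h(\mathfrak{r})=h^{*}(\mathfrak{r})\,\mathfrak{s}$, and using $\mathfrak{s}^2=1$, one obtains
\[
\bigl(f_1+\mathfrak{s}\,g_1\bigr)\bigl(f_2+\mathfrak{s}\,g_2\bigr)
=\bigl(f_1 f_2+g_1^{*}g_2\bigr)+\mathfrak{s}\,\bigl(f_1^{*}g_2+g_1 f_2\bigr),
\]
all products taken in $(\Z/q\Z)[x]/(x^n-1)$. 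Note that $h\mapsto h^{*}$ is the coordinate permutation $a_0+\sum_{i\ge1}a_i x^i\mapsto a_0+\sum_{i\ge1}a_{n-i}x^i$, so it costs $O(n)$. Hence one group-ring multiplication reduces to four products in $(\Z/q\Z)[x]/(x^n-1)$, four reversals, and $O(n)$ additions in $\Z/q\Z$.

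Next I would bound one product in $(\Z/q\Z)[x]/(x^n-1)$. Lift the two inputs to $\Z[x]$ with coefficients in $[0,q)$, multiply them as polynomials over $\Z$, then fold the result modulo $x^n-1$ (add the coefficient of $x^{n+j}$ to that of $x^{j}$) and reduce the coefficients modulo $q$; the folding and the reductions cost $O(n\log q)$. For the polynomial product over $\Z$ I would invoke Kronecker substitution: each coefficient of the product is a sum of at most $n$ terms each $<q^2$, hence fits in $s=O(\log q+\log n)$ bits, so packing the inputs into $O(ns)$-bit integers, multiplying with a fast integer-multiplication algorithm, and unpacking costs $\tilde O(ns)$ bit operations. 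Since $q\ge n^2$ we have $\log n=O(\log q)$, so $s=O(\log q)$ and the whole product is $\tilde O(n\log q)$. Summing over the four products, a multiplication in $(\Z/q\Z)[D_{2n}]$ runs in $\tilde O(n\log q)$ time. The identical argument, with $x^n-1$ replaced by $\mathfrak{r}^{n/2}+1$ (a negacyclic convolution of length $n/2$), gives the same bound for $\mathbf R_q$, which is the ring actually used in the cryptosystem.

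The only step that is not completely routine is the derivation of the displayed product formula, i.e.\ correctly tracking the anti-automorphism $h(\mathfrak{r})\mapsto h^{*}(\mathfrak{r})$ induced by conjugation by $\mathfrak{s}$ and checking that the four resulting products are genuinely convolutions modulo $x^n-1$; once that bookkeeping is done, the time bound is a standard application of fast integer (equivalently, fast polynomial) multiplication.
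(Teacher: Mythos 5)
Your proof is correct and follows essentially the same route as the paper: write each element as $f(\mathfrak{r})+\mathfrak{s}\,g(\mathfrak{r})$, use conjugation by $\mathfrak{s}$ (a linear-time coefficient reversal) to reduce the group-ring product to four polynomial multiplications in $(\Z/q\Z)[\mathfrak{r}]$, each done in $\tilde O(n\log q)$ time by fast multiplication. Your product formula matches the paper's identity $(f_1f_3+(\mathfrak{s}f_2\mathfrak{s})f_4)+\mathfrak{s}(f_2f_3+(\mathfrak{s}f_1\mathfrak{s})f_4)$, and the extra detail on Kronecker substitution and the negacyclic case for $\mathbf{R}_q$ only elaborates what the paper leaves implicit.
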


In this theorem, we use the whole group ring for generality.
One can check that it
applies to $ \mathbf{R} $ as well.

\begin{proof}
  The main idea is to separate the terms in $ (\Z/q\Z)[D_{2n}] $ 
into two parts. Let $ f_1 + \mathfrak{s} f_2 $ and $ f_3+\mathfrak{s} f_4 $  be two elements
 where $ f_1, f_2, f_3 $ and 
$ f_4 $ are polynomials in $ \mathfrak{r} $.    We have
\begin{align*}
  & (f_1 + \mathfrak{s} f_2 ) (f_3 + \mathfrak{s} f_4)\\
=& f_1 f_3 + \mathfrak{s} f_2 f_3 + f_1 \mathfrak{s} f_4 + \mathfrak{s} f_2 \mathfrak{s} f_4\\
=& f_1 f_3 + \mathfrak{s} f_2 f_3 + \mathfrak{s} (\mathfrak{s} f_1 \mathfrak{s}) f_4 + (\mathfrak{s} f_2 \mathfrak{s}) f_4\\
=& (f_1 f_3 + (\mathfrak{s} f_2 \mathfrak{s}) f_4 ) + \mathfrak{s} (  f_2 f_3 + (\mathfrak{s} f_1 \mathfrak{s}) f_4 )
\end{align*}
where $ \mathfrak{s}f_1\mathfrak{s} $ and $ \mathfrak{s} f_2 \mathfrak{s} $ are polynomials in $ \mathfrak{r} $
that can be calculated in linear time. To find the 
product, we need to compute four polynomial multiplications
in $ (\Z/q\Z)[\mathfrak{r}] $, that can be done in time $ \tilde{O} (n \log q) $.     
\end{proof}

In the normal version of group ring LWE, $ s $ and $ e $
are selected according
to error distribution, while in the regular
version, only $ e $ is selected according to error distribution.
The following theorem shows that these two versions are equivalent.

\begin{theorem}
  The regular version of dihedral GR-LWE can be reduced to the
  normal version of
  dihedral GR-LWE.
\end{theorem}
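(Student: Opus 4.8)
The plan is to run the standard ``normal form'' (Hermite normal form) reduction for LWE, adapted to the non-commutative ring $\mathbf{R}_q$. It is cleanest to carry this out for the discretized version of dihedral GR-LWE, in which the samples are pairs $(a,b)\in\mathbf{R}_q\times\mathbf{R}_q$ of the form $b=as+e$ with $e$ drawn from the discrete error distribution $\bar{\chi}$; the continuous version stated in the Definition then follows by the usual argument of rounding $\bar e$ into $q^{-1}\mathbf{R}/\mathbf{R}$ and absorbing the (tiny) rounding term into the fresh noise. Given a regular instance with unknown uniform secret $s$, the reduction draws a handful of extra samples, isolates one sample whose first coordinate is invertible in $\mathbf{R}_q$, and uses it to turn every other sample into a normal-form sample whose secret is the \emph{error} of the isolated sample, which is small by construction.

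First I would establish the $\F_q$-analogue of the invertibility criterion proved above: a uniformly random element of $\mathbf{R}_q$ is invertible with probability $1-O(1/n)$. Since $\gcd(q,2n)=1$, the algebra $\mathbf{R}_q=\F_q[D_{2n}]/((\mathfrak{r}^{n/2}+1))$ is semisimple, and Wedderburn's theorem writes it as a finite product $\prod_j M_2(\F_{q^{d_j}})$ of $2\times 2$ matrix algebras over extensions of $\F_q$ (with no $1$-dimensional factors, by the choice of $\mathbf{R}$). An element is a unit iff its image in every factor is invertible, so a random $a\in\mathbf{R}_q$ is a unit with probability $\prod_j(1-q^{-d_j})(1-q^{-2d_j})\ge(1-1/q)^{2K}$, where $K=O(n)$ is the number of factors; since $q\ge n^2$ this is $1-O(1/n)$. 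Invertibility of a given $a$ is testable in polynomial time (e.g.\ via $\det A(a)\bmod q$), so $O(1)$ extra samples suffice to find, with overwhelming probability, a pair $(\bar a,\bar b)$ with $\bar b=\bar a s+\bar e$ and $\bar a$ a unit.

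For the transformation, given any remaining sample $(a,b=as+e)$ I would output
\[
  a' = -\,a\,\bar a^{-1}, \qquad b' = b - a\,\bar a^{-1}\,\bar b .
\]
A direct computation gives $b'=as+e-a\bar a^{-1}(\bar a s+\bar e)=e-a\bar a^{-1}\bar e=a'\bar e+e$, so $(a',b')$ is a normal-form sample with secret $\bar e$ and error $e$; multiplying $\bar a^{-1}$ on the \emph{right} of $a$ is precisely what makes the secret land on the side matching the convention $b=as$ of the Definition. Because $\bar a$ is a unit, $a\mapsto-a\bar a^{-1}$ is a bijection of $\mathbf{R}_q$, hence $a'$ is uniform; the error terms $e$ are untouched; and since within each sample $a_i$ and $e_i$ are independent and $e_i$ is independent of all the $a_j$'s, conditioning on the (purely $a$-dependent) event ``$\bar a$ is a unit'' does not change the joint distribution of $\bar e$ and the remaining $e_i$'s. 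Thus the $(a',b')$ form a genuine normal-form instance with secret $\bar e\leftarrow\bar{\chi}$; running the assumed normal-form solver returns $\bar e$, and then $s=\bar a^{-1}(\bar b-\bar e)$ solves the original regular instance. The whole reduction runs in polynomial time and uses only $O(1)$ samples beyond those consumed by the normal-form solver.

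The only step requiring genuine care is the invertibility count: one needs the number of Wedderburn blocks of $\mathbf{R}_q$ (which is $O(n)$ because the two-dimensional representations are indexed by $O(n)$ values of $k$) to be small compared to $q$, and this is exactly where the hypotheses $q\in[n^2,2n^2]$ and the absence of one-dimensional components enter. The non-commutativity is only a bookkeeping issue: a one-sided inverse in the finite ring $\mathbf{R}_q$ is automatically two-sided, and the transformation above is arranged so that the recovered secret multiplies $a'$ on the correct side. If one instead wants the reduction directly for the continuous version, the one extra subtlety is that $\mathbf{R}_q$ does not act on $\mathbb{T}=\mathbf{R}_\mathbb{R}/\mathbf{R}$; this is handled, as in \cite{LPR10}, by working with $q^{-1}\mathbf{R}/\mathbf{R}\cong\mathbf{R}_q$ and discretizing, which is the content of the reduction from the continuous to the discretized formulation mentioned at the outset.
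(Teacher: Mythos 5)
Your proposal is correct and follows essentially the same route as the paper's own (sketched) proof: take one sample $(\bar a,\bar b)$ with $\bar a$ invertible in $\mathbf{R}_q$ and use it to rewrite every other sample so that the new secret is the error $\bar e$ of that sample, exactly as the paper does with $(a_2a_1^{-1},\,a_2a_1^{-1}b_1-b_2)$. The additional details you supply --- the Wedderburn-based count showing a random element of $\mathbf{R}_q$ is invertible with probability $1-O(1/n)$, the uniformity of the transformed $a'$, and the care about the side of multiplication and the continuous-versus-discretized formulation --- are exactly the points the paper's sketch leaves implicit.
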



\begin{proof}
  (Sketch)  Suppose that the input of the
  LWE problem is $ (a_1,b_1) $ and $ (a_2, b_2) $. With high probability,
  $ a_1 $ is invertible, we construct the input for normal version of LWE
  as
  \[ (a_2 a_1^{-1}, a_2 a_1^{-1} b_1- b_2).  \]
  Note that
  \[ a_2 a_1^{-1} b_1- b_2 = a_2 a_1^{-1}  (a_1 s + e_1) - (a_2 s + e_2)
     = a_2 a_1^{-1} e_1 - e_2. \]
\end{proof}

\section{Security analysis of the group ring LWE}\label{sec:secur-analys-new}

In this section, we prove the main theorem

\begin{theorem}\label{MainTheorem}
  Let $\alpha=\alpha(n)\in (0,1)$, and let $q=q(n)$ be a prime such that $\alpha q\geq \sqrt{n}\omega(\sqrt{\log n})$. 
  Given an average case of search version of dihedral
  GR-LWE$_{q,\Psi_{\leq \alpha}}$ oracle with error distributions $\Psi_{\leq \alpha}$, there is a quantum polynomial time algorithm that solves 
  the search version of the  SVP problem for any invertible
  ideal $I$ of $ {\mathbf R} $  with
  approximate factor $ \tilde{O}({n}/\alpha) $. 
\end{theorem}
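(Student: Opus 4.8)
The plan is to adapt the Lyubashevsky–Peikert–Regev worst-case-to-average-case reduction for ring-LWE (\cite{LPR10}, itself building on \cite{Regev05}) to the non-commutative setting of $\mathbf{R} = \Z[D_{2n}]/((\mathfrak{r}^{n/2}+1)\Z[D_{2n}])$. The overall architecture is the standard ``iterative quantum step'': we maintain a BDD instance on the ideal lattice $I$ (or rather on a family of such instances), and in each round we (i) use the quantum part to go from a BDD solver on $I$ to a generator of a discrete Gaussian of a given width over the dual $I^{\vee}$ (or $I^{-1}$, identified with $\Lambda^*$ up to permutation via the second lemma of Section~\ref{sec:math-prel}), (ii) feed these Gaussian samples into the GR-LWE oracle to extract information, and (iii) use that information classically to produce a BDD instance on $I$ with a \emph{smaller} radius, shrinking geometrically until LLL/Babai finishes the job. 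The output approximation factor $\tilde{O}(n/\alpha)$ comes from the same bookkeeping as in ring-LWE: the $\sqrt n$ from the smoothing parameter, the $1/\alpha$ from the Gaussian width we can afford against the oracle, and the extra $\sqrt n$ from Lemma~\ref{lem:smallmul} governing the blow-up of error lengths under multiplication in the group ring.

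The concrete steps I would carry out, in order. First, establish the analogue of the ``$R$-LWE is as hard as $R$-DGS'' statement: reduce the (worst-case) SVP on an invertible ideal $I$ to a bounded-distance-decoding problem on $I$, then to the discrete Gaussian sampling problem on $I^{-1}$, using the duality lemma already proved in the excerpt to move between $\Lambda^{-1}$ and $\Lambda^*$ (the coordinate permutation $x\mapsto(x_0,x_{n-1},\dots,x_1)$ on the $\mathfrak r$-part is harmless for Gaussian/length arguments). Second — and this is the crux — show that given enough discrete Gaussian samples over $I^{-1}$ of suitable width, together with access to the GR-LWE oracle, one can solve the BDD instance on $I$ with radius reduced by a constant factor. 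The mechanism: multiply the Gaussian samples from $I^{-1}$ by elements derived from the (unknown) BDD target to manufacture synthetic GR-LWE samples $(a_i, b_i)$ whose secret encodes the BDD offset, invoke the oracle to recover the secret, and translate back. Here invertibility of $I$ is essential and is exactly why the excerpt insists on it: we need a surjective $\mathbf{R}$-module map $I \to \mathbf{R}_q$ to reduce mod $q$ and land in the oracle's domain. Third, iterate: the recovered secret lets us round off a chunk of the error, producing a BDD instance of smaller radius, and we repeat $\mathrm{poly}(n)$ times.

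The main obstacle — and the only place the non-commutativity genuinely bites — is Step two, specifically controlling the noise growth when we build synthetic LWE samples and when we ``clear denominators'' in the $I \leftrightarrow I^{-1}$ passage. In the commutative ring-LWE proof one freely commutes $a_i$, $s$, and error terms; here every product must respect the order fixed by the protocol ($b = sa + e$, decryption computes $v - su$), and the relevant norm control is not the $\ell_\infty$/component-wise bound of \cite{LPR10} but the matrix-norm $|\cdot|_{\mathrm{Mat}}$ introduced in the excerpt, whose value on $f(\mathfrak r) + \mathfrak s g(\mathfrak r)$ is $\max_i(|f(\xi^i)| + |g(\xi^i)|)$ by Lemma~\ref{eigenvalue}. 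I would carry the analysis entirely in this matrix-norm: it is submultiplicative by construction (it is an operator norm of the regular representation), so it replaces the role played by the canonical embedding in the original argument, and it degrades to the classical $\ell_\infty$ bound precisely on the commutative (diagonal) part, so the known estimates transfer. A secondary technical point is the Gaussian smoothing argument over the lattice $\Lambda$ attached to $I$: one must verify that $\eta_\varepsilon(\Lambda) \le \sqrt n \cdot \lambda_1(\Lambda^*)^{-1}\cdot\omega(\sqrt{\log n})$ holds with the same constants, which follows from the general lattice smoothing lemma since $\Lambda$ is just a full-rank integral lattice in $\R^n$ — the group-ring structure only enters through how we \emph{act} on it, not through the geometry of a single lattice. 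Granting those norm bookkeeping lemmas, the rest of the reduction is a line-by-line transcription of the ring-LWE proof with ``multiply on the left'' fixed consistently.
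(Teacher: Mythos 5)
Your overall architecture (the Regev/LPR iterative reduction, error bookkeeping in the matrix norm $|\cdot|_{\rm Mat}$, the duality lemma identifying $\Lambda^*$ with $\Lambda^{-1}$ up to a coordinate permutation, and invertibility of $I$ used to obtain the surjective module maps onto ${\mathbf R}_q$) is the same as the paper's; these ingredients appear as Lemma~\ref{iter1}, Lemma~\ref{iter2}, Corollary~\ref{ErrorInReducion} and Lemma~\ref{GaussBall}. But you have the primal/dual roles swapped, and that breaks the endgame. In the paper the discrete Gaussian lives on $I$ itself and the BDD lives on the dual: Lemma~\ref{iter1} turns DGS samples on the right ideal $I$ of width $r$, together with the GR-LWE oracle, into a solver for ${\rm BDD}_{I^{-1},\,\alpha q/(\sqrt{2}r)}$ in the matrix norm; Lemma~\ref{iter2} (the quantum step) turns that BDD solver on $\Lambda^*\cong I^{-1}$ into DGS samples on $I$ of smaller width; starting from $r\geq 2^{2n}\lambda_n(I)$, where sampling is classically easy, and iterating until the width reaches $\lambda_n\sqrt{n}\,\omega(\log n)/\alpha$, a single sample from that narrow Gaussian on $I$ \emph{is} the $\tilde{O}(n/\alpha)$-approximate short vector. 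In your version the Gaussian samples live on $I^{-1}$ and the BDD on $I$, so your loop terminates with either a solved BDD/CVP instance on $I$ or short vectors of $I^{-1}$ --- neither is a short nonzero vector of $I$, which is what the theorem demands.

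The unsupported link is your opening claim that worst-case SVP on $I$ ``reduces to a bounded-distance-decoding problem on $I$.'' You give no such reduction: a BDD oracle only returns the closest lattice point to targets within $\lambda_1/2$, and extracting a short nonzero lattice vector from it would need something like a uSVP gap promise that you do not have and that the paper never invokes; the paper avoids needing any such step precisely by keeping the DGS on $I$. Relatedly, ``shrinking the BDD radius geometrically until LLL/Babai finishes'' is the $\beta$-BDD to $(q,\beta)$-BDD self-reduction from the preliminaries --- a sub-step inside one classical round, not the outer iteration; what shrinks across rounds is the DGS width (equivalently, the tolerable BDD radius on the dual grows). If you transpose your roles --- DGS on $I$, BDD on $I^{-1}$, with the order of multiplication and the left/right module maps $I\hookrightarrow{\mathbf R}$ and ${\mathbf R}\hookrightarrow I^{-1}$ arranged as in Lemma~\ref{iter1} --- then the rest of your bookkeeping (submultiplicativity of the matrix norm, the smoothing bound, the Gaussian tail estimate of Lemma~\ref{GaussBall}) does line up with the paper's proof.
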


\begin{proof}
	  It is from Lemmas~\ref{iter1} and~\ref{iter2} that with dihedral
	GR-LWE$_{q,\Psi_{\leq \alpha}}$ oracle one can sample a discrete Gaussian on the ideal $I$ of width $\lambda_n\sqrt{n}\omega(\log n)/\alpha$, starting with a sufficiently large value of width $r\geq 2^{2n}\lambda_n(I)$ where any polynomial number of samples can be generated classically~\cite{Regev05}. As a sample from the discrete Gaussian has Euclidean length at most $\sqrt{n}\cdot\lambda_n(I)\sqrt{n}\omega(\log n)/\alpha$ with overwhelming probability. So the sample solves the search version of the  SVP problem for the ideal $I$  with approximate factor $ \tilde{O}({n}/\alpha) $. 
\end{proof}


Let us first review the main ideas in Regev's reduction
from approx-SVP to LWE, which inspires our proof.
The reduction can be divided into iterative steps. We will solve the
{\em Discrete Gaussian Sampling} problem (DGS) for a
lattice, that has a comparable hardness as approx-SVP.
The DGS$_{\L, r}$ problem is to sample lattice points of a lattice
$ \L $ according to a Gaussian centering at $ O $ with width $ r $.
For precise definition, see~\cite{Regev05}.
The DGS
will be reduced, by a quantum algorithm, to a $ \beta $-BDD problem
on its dual lattice $ \L^* $,
which will then
be reduced to a $ (q,\beta) $-BDD problem. The $ (q,\beta) $-BDD
will be reduced to a DGS problem of larger width. This step needs help from
the search LWE oracle. After a few iterations, we arrive at DGS with
a width that allows  a polynomial time algorithm.

The only step that needs an LWE oracle is the reduction from
$ (q,\beta) $-BDD to DGS. 
Suppose we have a $ (q, \beta) $-BDD instance $ y ( = x +e) $,
where  $ x\in \L^* $ and
$ ||e|| \leq \lambda_1 (\L^*) \beta $.
We wish to find $ x \pmod{q \L^*} $.
We are able to sample a random element $ z\in \L $
by the DGS algorithm,
such that $ ||z|| \leq m/ \lambda_1(\L^*) $,
where $   m \geq q \sqrt{n} $.
So we have
\[ m/\lambda_1(\L^*) \geq q \sqrt{n } /\lambda_1(\L^*)  \approx q \eta( \L) =
  \eta(q \L), \]
where $ \eta ( * ) $ is the smoothing parameter of a lattice.
Let $ a  $ be $ z \mod q \L $. Then $ a $ is a random element in
$ \F_q^n $ by the definition of a smoothing parameter.
We compute $ a $ by writing down the coefficients
of $ z $ in the base $ B $  and modulo them by $ q $.
There is a map from $ \F_q^n  $ to $ \L \pmod{q \L} $
given by the base matrix $ \B $, such that $  \psi \B =1 $,
where $ \psi $ is a map in the $ \Z $-module exact sequence:
\[ 0 \rightarrow q \L \rightarrow \L \stackrel{\psi}{\rightarrow}
  \F_q^n \rightarrow 0  \]
Note that the map given by $ \B $ is not a $ \Z $-module homomorphism, since the
exact sequence is not  splitting.
Let $ b= z(x+e)^T = z x^T + z e^T \pmod{ q \Z}$,
and $ s = x \B^T $.
Note that $ |z e^T|_\infty  \leq m \beta $,
and $ z x^T = a \B x^T = a \B  (s (\B^{-1})^T)^T = a s $. Call the search
LWE oracle, we will get $ s  $,
which gives us $ x \pmod{q \L^*} $, and completes the reduction. 
We can see that working with the dual lattice is very important.

\begin{remark}
  Here the transformation by $ \B $ is important. We can not just
  mod $ z $ by $ q \Z^n $,  since it may be
  the case that $ \L \subseteq q \Z^n $, or $ \L $ is not even an integral lattice.
\end{remark}

For LWE on the ring $ R = \Z[x]/(x^n+1) $, the idea is similar.
Any ideal in the number field $ \Q [x]/(x^n+1) $ is
a $ \Z $-module thus corresponds to a lattice if we provide an embedding.
There are two ways of embedding: canonical and coefficient.
If we use canonical embedding, then the dual is $ I^{\vee} $~\cite{LPR10},
instead of $ I^{-1} $.
To keep the multiplicative structure of the ring, we need a
$ {R} $-module isomorphism from $ I/(q I) $ to $ {R}/(q {R}) = \F_q[x]/(f(x)) $,
and from $ I^\vee/(q I^\vee) $ to $ {R}^\vee/(q {R}^\vee) = \F_q[x]/(f(x)) $,
so we can recover $ I^{\vee}/(q I^{\vee}) $ from a polynomial in $ {R}/(q {R}) $. 
As pointed out in \cite{LPR10}, it is important to clear ideals while
preserving the $ R $-module structure. 

\begin{example}
Let $ R=\Z $, $ q =5 $ and $ I = (3) $. Suppose that $ z = 24 \in I $,
$ z \pmod{qI} $ should be $ 9 $ in the parallelepiped $ [0,15) $.
  Dividing by $ t=3 $, we send $ z $ to $ 3 $ in $ \Z/q\Z $.
  Hence multiplying by $ 3 $ is a $\Z $-module isomorphism from
  $ \Z/5\Z $ to $ I/ 5I $.

On the other hand, $\Z $-module isomorphism is not unique.
  If we can  just use  the inclusion $ I \hookrightarrow R $,
  we have $ z = 4 \pmod{5} $. This is another $ \Z $-module isomorphism.
  If $ \psi: I \rightarrow R $ is a R-module isomorphism, so is $ t \psi $
  for any $ t\in R $. 
  
  To complete the reduction, one needs to send an element
  in $ \Z/5\Z $ back to $ I^{-1}/5 I^{-1} $.
  Here $ I^{-1} = (1/3)\Z $. One can see that the inclusion $ \Z \subseteq
  I^{-1} $ induces an isomorphism $ \Z/5\Z \rightarrow I^{-1}/5I^{-1} $.

\end{example}

Now we will extend the idea to non-commutative group ring LWE.
  We should use coefficient embedding to map ideals to lattices.
  In the following discussion, we will use the same symbol for
  an ideal and its corresponding lattice under coefficient embedding.
The precise error distribution in the definition of ring-LWE to ensure the hardness result is one important issue. In~\cite{LPR10}, the authors generalized one dimensional Gaussian error distribution in plain-LWE~\cite{Regev05} to $n$-dimensional (ellipitcal) Gaussian which is described by an $n\times n$-covariance matrix. However, in~\cite{LPR10} they chose the canonical embedding which makes the Gaussian error distributions during the reduction always diagonal. In our case, the error distributions in the reduction do not appear as diagonal any more.

\begin{lemma}
	Let $L$ be a lattice, let $u\in \mathbb{R}^n$ be a vector, let $r,s>0$ be two reals, let $A\in \mathbb{R}^{n\times n}$ be a non-singular matrix. Assume that smooth property $\sum_{y\in L^*\setminus \{0\}}\exp(-\pi y^T(\frac{1}{r^2}I_n+\frac{1}{s^2}A^T\cdot A)^{-1}y)
	\leq  \epsilon$ holds for some $\epsilon$, where $I_n$ denotes the $n\times n$ identity matrix. The distribution of $Av+e$ where $v$ is distributed according to $DGS_{L+u,r}$ and $e$ is the $n$ dimensional Gaussian multivariable with mean vector $0$ and diagonal covariance matrix $\frac{s^2}{2\pi}I_{n}$ is within statistical distance $4\epsilon$ of a Gaussian multivariable with mean vector $0$ and covariance matrix $\frac{r^2}{2\pi}A\cdot A^T+\frac{s^2}{2\pi}I_n$.
\end{lemma}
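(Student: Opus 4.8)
The plan is to adapt the classical Gaussian-convolution argument (as in Regev's LWE reduction and its lattice-theoretic prerequisites, e.g. the Fourier-analytic lemmas of Micciancio--Regev) to the anisotropic, non-diagonal setting forced on us by the coefficient embedding. The core idea is to compute the characteristic function (Fourier transform) of the distribution of $Av+e$ and show it is $4\epsilon$-close to the characteristic function of the target Gaussian. First I would normalize: since $v$ is distributed as $DGS_{L+u,r}$, the vector $Av$ is distributed as a discrete Gaussian on the (shifted) lattice $A(L+u)=AL+Au$ but with a \emph{non-spherical} covariance, namely its continuous envelope is the Gaussian with covariance $\frac{r^2}{2\pi}AA^T$. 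Adding the independent spherical noise $e$ with covariance $\frac{s^2}{2\pi}I_n$, the \emph{continuous} convolution would have exactly covariance $\frac{r^2}{2\pi}AA^T+\frac{s^2}{2\pi}I_n$; the whole content of the lemma is that the discreteness of $v$ only perturbs this by $4\epsilon$ in statistical distance, provided the stated smoothing condition holds.

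The key steps, in order: (1) Write the density of $Av+e$ at a point $w$ as a sum over $x\in L+u$ of (Gaussian in $Ax$ with covariance $\propto r^2 AA^T$) times (Gaussian in $w-Ax$ with covariance $\propto s^2 I_n$); complete the square in $x$ to factor out a Gaussian in $w$ with the target covariance $\Sigma:=\frac{r^2}{2\pi}AA^T+\frac{s^2}{2\pi}I_n$, leaving a residual sum over the lattice $L+u$ of a Gaussian centered at a $w$-dependent point with some covariance matrix $M$ built from $r,s,A$. (2) Apply Poisson summation to that residual sum, turning it into a sum over the dual lattice $L^*$; the $w$-independent leading term ($y=0$) gives precisely the target Gaussian density, and the error terms are indexed by $y\in L^*\setminus\{0\}$. (3) Bound the total error by $\sum_{y\in L^*\setminus\{0\}}\exp(-\pi y^T M' y)$ for the appropriate quadratic form $M'$, and check by linear algebra that $M'=(\frac{1}{r^2}I_n+\frac{1}{s^2}A^TA)^{-1}$ — this is where the somewhat unusual shape of the smoothing hypothesis gets justified; it is exactly the harmonic-mean-type combination that arises from the Schur-complement manipulation in step (1). (4) Conclude that the densities differ pointwise (after integrating the $w$-variation) by a factor within $[1-2\epsilon,1+2\epsilon]$ or so, which yields statistical distance at most $4\epsilon$; the factor of $4$ (rather than $2$) comes from handling both the shifted lattice $L+u$ via a standard argument that the shift can only help, and from the two-sided nature of the bound.

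The main obstacle I expect is the bookkeeping in steps (1) and (3): with $A$ a general non-singular matrix, none of the covariance matrices commute or are simultaneously diagonalizable, so completing the square has to be done with genuine matrix algebra (Woodbury / Schur-complement identities) rather than the coordinatewise computation that suffices in the canonical-embedding case of~\cite{LPR10}. One has to verify carefully that the quadratic form governing the Poisson-summation error terms simplifies to exactly $(\frac{1}{r^2}I_n+\frac{1}{s^2}A^TA)^{-1}$ and not some other combination; a convenient sanity check is the diagonal case $A=I$, where it must reduce to the classical $(\frac{1}{r^2}+\frac{1}{s^2})^{-1}I_n=\frac{r^2s^2}{r^2+s^2}I_n$, matching the known smoothing condition for a sum of two independent Gaussians. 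A secondary subtlety is justifying that one may freely push the matrix $A$ through the discrete Gaussian, i.e. that $DGS_{L+u,r}$ transformed by $A$ is $DGS_{A(L+u)}$ with covariance $r^2AA^T/(2\pi)$ in the appropriate sense; this is immediate from the change-of-variables formula but should be stated so that the dual lattice $L^*$ (rather than $(AL)^*=A^{-T}L^*$) is the one appearing in the hypothesis, which it is because we keep the sum on the original lattice side before transforming.
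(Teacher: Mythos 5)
Your proposal is correct and follows essentially the same route as the paper's own sketch: compute the density of the lattice-plus-continuous-Gaussian convolution, pass to the dual lattice $L^*$ via Poisson summation (the paper phrases this as the Fourier-series ratio after first factoring $Av+e=A(v+A^{-1}e)$ and only applying $A$ at the end), bound the nonzero dual terms by the hypothesis with quadratic form $(\frac{1}{r^2}I_n+\frac{1}{s^2}A^TA)^{-1}$, and conclude the $4\epsilon$ statistical-distance bound before pushing the result through the linear map $A$. The matrix bookkeeping you outline (Schur/Woodbury completion of the square, and the sanity check that $A=I_n$ recovers the classical $\frac{r^2s^2}{r^2+s^2}$ smoothing condition) matches what the paper's computation implicitly uses.
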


\begin{proof}[Skecth of Proof]
	Note that non-singular linear transformation of Gausssian multivariable is still Gaussian, and $Av+e=A(v+A^{-1}e)$. Let $Y=v+A^{-1}e$. One can directly compute the distribution of $Y$
	\[
	Y(x)=\frac{\exp(-\pi x^T\Sigma^{-1}x)}{\det(\Sigma)^{1/2}}\frac{\sum_{y\in L^*}e^{-2\pi \sqrt{-1}<c_0,y>}\exp(-\pi y^T(\frac{1}{r^2}I_n+\frac{1}{s^2}A^T\cdot A)^{-1}y)}{\sum_{y\in L^*}e^{2\pi \sqrt{-1}<u,y>}\exp(-\pi r^2 ||y||^2)}
	\]
	where $\Sigma=r^2I_n+s^2A^{-1}\cdot A^{-T}$ and $c_0$ is a certain vector computed from $u$ and $x$. Since we have
	\begin{align*}
	&|1-\sum_{y\in L^*}e^{-2\pi \sqrt{-1}<c_0,y>}\exp(-\pi y^T(\frac{1}{r^2}I_n+\frac{1}{s^2}A^T\cdot A)^{-1}y)|\\
	\leq& \sum_{y\in L^*\setminus \{0\}}\exp(-\pi y^T(\frac{1}{r^2}I_n+\frac{1}{s^2}A^T\cdot A)^{-1}y)\\
	\leq & \epsilon
	\end{align*}
	and
	\begin{align*}
	& |1-\sum_{y\in L^*}e^{2\pi \sqrt{-1}<u,y>}\exp(-\pi r^2 ||y||^2)|\\
	\leq & |\sum_{y\in L^*\setminus \{0\}}\exp(-\pi r^2 ||y||^2)|\\
	\leq& \sum_{y\in L^*\setminus \{0\}}\exp(-\pi y^T(\frac{1}{r^2}I_n+\frac{1}{s^2}A^T\cdot A)^{-1}y)\\
	\leq & \epsilon,
	\end{align*}
	we immediately have
	\begin{align*}
	|Y(x)-\frac{1}{\det(\Sigma)^{1/2}}\exp(-\pi x^T\Sigma^{-1}x)|\leq 4\epsilon.
	\end{align*}
	So by integrating over $\mathbb{R}^n$, the statistical distance between $Y=v+A^{-1}e$ and the Gaussian distribution $\frac{1}{\det(\Sigma)^{n/2}}\exp(-\pi x^T\Sigma^{-1}x)$ is at most $4\epsilon$. Finally, since non-singular linear transformation of Gausssian multivariable is still Gaussian, $Av+e=AY$ has statistical distance at most $4\epsilon$ with the Gaussian distribution with mean vector $0$ and covariance matrix
	\[
	\frac{1}{2\pi}A\Sigma A^T=\frac{1}{2\pi}(r^2A\cdot A^T+s^2I_n).
	\]
\end{proof}

\begin{remark}
	\begin{itemize}
		\item 	If the transformation matrix $A$ is diagonal, then it reduces to the case in~\cite{LPR10}.
		\item  The proof relies on the invertibility of the matrix $A$. In the application to BDD problem, the errors in BDD are invertible with very high probability except a zero-measure set.
	\end{itemize}
	
\end{remark}

Applying the above lemma to the group ring considered in this paper, together with Lemma~\ref{eigenvalue}, the following corollay is immediate.

\begin{corollary}\label{ErrorInReducion}
	Let $L$ be the ideal lattice obtained by coefficients embedding of $I\subset \mathbf{R}$ to $\mathbb{R}^{n}$. Let $\mathfrak{h}=f(\mathfrak{r})+\mathfrak{s}g(\mathfrak{r})\in \mathbf{R}_\mathbb{R}$ for some polynomials of degree at most $\frac{n}{2}-1$ over $\mathbb{R}$, and let $\lambda=|\mathfrak{h}|_{\rm Mat}$. Let $r,s>0$ be two reals, denote $t=1/\sqrt{\frac{1}{r^2}+\frac{\lambda^2}{s^2}}$. Assume that smooth property $\sum_{y\in L^*\setminus \{0\}}\exp(-\pi t^2||y||^2)\leq  \epsilon$ holds for some $\epsilon$. The distribution of $\mathfrak{h}v+e$ where $v$ is distributed according to $DGS_{L,r}$ and $e$ is the $n$ dimensional Gaussian multivariable with mean vector $0$ and diagonal covariance matrix $\frac{s^2}{2\pi}I_{n}$ is within statistical distance $4\epsilon$ of a Gaussian multivariable that is equivalent to the diagonal Gaussian
	$$\prod_i \chi_{\sqrt{r^2(|f(\xi^i)|+ |g(\xi^i)|)^2+s^2}}\times \prod_i \chi_{\sqrt{r^2(|f(\xi^i)|- |g(\xi^i)|)^2+s^2}}$$
	up to certain unitary base change.

\end{corollary}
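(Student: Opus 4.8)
The plan is to obtain the corollary as a direct specialization of the preceding lemma, taking $u=0$ and taking the matrix $A$ there to be $A(\mathfrak{h})$, the regular-representation matrix of $\mathfrak{h}$: this is exactly the matrix of the map $v\mapsto\mathfrak{h}v$ on $\mathbf{R}_\mathbb{R}\cong\mathbb{R}^n$ in the coefficient embedding, so $\mathfrak{h}v+e=A(\mathfrak{h})v+e$. First I would dispose of the non-singularity requirement of the lemma: by the remark following it (and by the invertibility criterion recorded earlier), $A(\mathfrak{h})$ fails to be invertible only on the measure-zero set where some $|f(\xi^i)|=|g(\xi^i)|$, and for such $\mathfrak{h}$ the statement follows from the invertible case by continuity; so we may assume $A(\mathfrak{h})$ invertible. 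What then remains is (i) to deduce the lemma's smoothing hypothesis from the hypothesis stated here, (ii) to read off the lemma's output covariance, and (iii) to diagonalize that covariance using Lemma~\ref{eigenvalue}.

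For step (i), I would set $M:=\frac{1}{r^2}I_n+\frac{1}{s^2}A(\mathfrak{h})^TA(\mathfrak{h})$. Since $A^TA$ and $AA^T$ share the same spectrum, Lemma~\ref{eigenvalue} shows the eigenvalues of $M$ are $\frac{1}{r^2}+\frac{1}{s^2}(|f(\xi^i)|\pm|g(\xi^i)|)^2$; because $\lambda=|\mathfrak{h}|_{\rm Mat}$ equals the largest of the quantities $|f(\xi^i)|+|g(\xi^i)|$, the largest eigenvalue of $M$ is $\frac{1}{r^2}+\frac{\lambda^2}{s^2}=\frac{1}{t^2}$. Hence $M\preceq\frac{1}{t^2}I_n$, so by operator antitonicity of inversion $M^{-1}\succeq t^2I_n$, and therefore $y^TM^{-1}y\geq t^2\|y\|^2$ for every $y$. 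Summing $\exp(-\pi\,\cdot\,)$ over $L^*\setminus\{0\}$ then gives
\[
\sum_{y\in L^*\setminus\{0\}}\exp\!\bigl(-\pi\,y^TM^{-1}y\bigr)\;\leq\;\sum_{y\in L^*\setminus\{0\}}\exp\!\bigl(-\pi t^2\|y\|^2\bigr)\;\leq\;\epsilon,
\]
which is exactly the smoothing hypothesis the lemma needs (specialized to $u=0$).

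For steps (ii) and (iii): the lemma now tells us that $\mathfrak{h}v+e$ is within statistical distance $4\epsilon$ of the Gaussian with mean $0$ and covariance $\frac{1}{2\pi}\bigl(r^2A(\mathfrak{h})A(\mathfrak{h})^T+s^2I_n\bigr)$. By the proof of Lemma~\ref{eigenvalue} there is a unitary $\psi$ conjugating $A(\mathfrak{h})A(\mathfrak{h})^T$ to the diagonal matrix with entries $(|f(\xi^i)|+|g(\xi^i)|)^2$ and $(|f(\xi^i)|-|g(\xi^i)|)^2$; the same $\psi$ conjugates $r^2A(\mathfrak{h})A(\mathfrak{h})^T+s^2I_n$ to the diagonal matrix with entries $r^2(|f(\xi^i)|+|g(\xi^i)|)^2+s^2$ and $r^2(|f(\xi^i)|-|g(\xi^i)|)^2+s^2$. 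Recalling that $\chi_\alpha$ has per-coordinate variance $\alpha^2/(2\pi)$, and that statistical distance is invariant under the (unitary, hence invertible) change of variables by $\psi$, this identifies the limiting distribution with $\prod_i\chi_{\sqrt{r^2(|f(\xi^i)|+|g(\xi^i)|)^2+s^2}}\times\prod_i\chi_{\sqrt{r^2(|f(\xi^i)|-|g(\xi^i)|)^2+s^2}}$ up to the unitary base change $\psi$, which is the assertion.

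The only step with any real subtlety is (i): one has to keep straight that it is the spectrum of $A(\mathfrak{h})A(\mathfrak{h})^T$ (equivalently $A(\mathfrak{h})^TA(\mathfrak{h})$) that Lemma~\ref{eigenvalue} pins down, not that of $A(\mathfrak{h})$ itself, and that its maximum is exactly $\lambda^2=|\mathfrak{h}|_{\rm Mat}^2$, so that the $t$ appearing in the statement is precisely the normalization making the domination $M\preceq(1/t^2)I_n$ tight and hence the hypothesis transfer lossless. Everything else is routine bookkeeping about multivariate Gaussians and their linear images.
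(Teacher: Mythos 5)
Your proposal is correct and follows essentially the same route the paper intends: specializing the preceding lemma with $u=0$ and $A=A(\mathfrak{h})$, transferring the smoothing hypothesis via the spectral bound $\frac{1}{r^2}I_n+\frac{1}{s^2}A^TA\preceq \frac{1}{t^2}I_n$ (which the paper leaves implicit in calling the corollary ``immediate''), and diagonalizing the output covariance $\frac{1}{2\pi}(r^2A(\mathfrak{h})A(\mathfrak{h})^T+s^2I_n)$ by the unitary from Lemma~\ref{eigenvalue}. Your measure-zero/continuity treatment of singular $A(\mathfrak{h})$ matches the paper's own remark on invertibility, so there is nothing further to add.
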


Now we can prove the first part of the iteration algorithm in our scenario.
\begin{lemma}\label{iter1}[First part of iteration]
Let $\alpha=\alpha(n)\in (0,1)$, prime $q=q(n)>2$, let $I$ be a right ideal of
$\mathbf{R}$ and integer $r>0$ such that
\[ \sum_{y\in I^{-1}\setminus \{0\}}\exp(-\pi \frac{r^2}{2q^2}||y||^2)\leq
  \epsilon  \] for some negligible $\epsilon=\epsilon(n)$. There is a probabilistic polynomial time classical reduction from BDD$_{I^{-1},\alpha q/\sqrt{2}r}$ in the 
matrix norm to GR-LWE$_{q,\Psi_{\leq \alpha}}$.
\end{lemma}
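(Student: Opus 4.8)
The plan is to transplant the BDD-to-LWE step of the reductions of Regev~\cite{Regev05} and Lyubashevsky--Peikert--Regev~\cite{LPR10} to the dihedral group ring, with group-ring multiplication in the role of the inner product and with Corollary~\ref{ErrorInReducion} controlling the (now elliptical) error. As observed just after the definition of the $(q,\beta)$-BDD problem, solving BDD on $I^{-1}$ reduces to solving its $(q,\cdot)$-variant: from a solution $x'$ with $x\equiv x'\pmod{qI^{-1}}$ one passes to the instance $\bigl((y-x')/q,\dots\bigr)$ on the same lattice with the error scaled down by $q$, and after $O(\log)$ rounds LLL clears the residual. So it suffices to solve $(q,d)$-BDD on $I^{-1}$ with $d=\alpha q/(\sqrt2\,r)$ in the matrix norm, using the GR-LWE oracle together with samples from $DGS_{I,r}$ (available in the iteration: classically when $r$ is large, and from the preceding quantum step otherwise).

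Given such an instance $y=x+e$ on $I^{-1}$, the reduction repeatedly draws $z\leftarrow DGS_{I,r}$, draws an auxiliary continuous spherical Gaussian $e_0$ of width $\alpha/\sqrt2$, and outputs $a=\phi\bigl(z\bmod qI\bigr)\in\mathbf{R}_q$ and $b=(yz)/q+e_0\bmod\mathbf{R}\in\mathbb{T}$, where $\phi\colon I/qI\to\mathbf{R}_q$ is an isomorphism of right $\mathbf{R}_q$-modules. Such a $\phi$ exists precisely because $I$ is invertible: $I^{-1}I=\mathbf{R}$ forces $I/qI$ to be free of rank one over the semisimple ring $\mathbf{R}_q$, which is the ``onto $\mathbf{R}$-module morphism $I\to\mathbf{R}_q$'' of the remark following the GR-LWE definition, and the reason invertibility is assumed. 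Since $x\in I^{-1}$ and $z\in I$ one has $xz\in\mathbf{R}$, so $b=(xz\bmod q\mathbf{R})/q+(e/q)z+e_0\bmod\mathbf{R}$. For fixed $x$ the map $z\mapsto xz\bmod q\mathbf{R}$ is a homomorphism of right $\mathbf{R}_q$-modules $I/qI\to\mathbf{R}_q$, hence after transport along $\phi$ is left-multiplication by a fixed $c_x\in\mathbf{R}_q$; moreover $x\mapsto c_x$ is an isomorphism $I^{-1}/qI^{-1}\to\mathbf{R}_q$ (injectivity because $\{x:xI\subseteq q\mathbf{R}\}=qI^{-1}$, surjectivity by counting). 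Thus $b=c_x\cdot a/q+e''$ with $c_x$ fixed and $e''=(e/q)z+e_0\bmod\mathbf{R}$; putting $s:=c_x$ --- after composing, if necessary, the isomorphisms with the anti-automorphism $g\mapsto g^{-1}$ of $\mathbf{R}$ (an orthogonal change of coordinates) to place the secret on the side used in the definition --- produces a genuine GR-LWE sample $b=as/q+e''$ whose fixed secret $s$ determines $x\bmod qI^{-1}$.

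It remains to check the distributions. Because $I^{-1}$ is, up to an orthogonal coordinate permutation, the dual lattice of $I$, the hypothesis $\sum_{y\in I^{-1}\setminus\{0\}}\exp\bigl(-\pi\tfrac{r^2}{2q^2}\|y\|^2\bigr)\le\epsilon$ gives $r\ge\sqrt2\,q\,\eta_\epsilon(I)>\eta_\epsilon(qI)$, so $z\bmod qI$ is within $\epsilon$ of uniform on $I/qI$ and $a$ is $\epsilon$-uniform on $\mathbf{R}_q$; the same hypothesis dominates the smoothing condition of Corollary~\ref{ErrorInReducion} for the element $\mathfrak{h}=e/q$, which has $|\mathfrak{h}|_{\mathrm{Mat}}=|e|_{\mathrm{Mat}}/q\le\alpha/(\sqrt2\,r)$. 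Then Corollary~\ref{ErrorInReducion}, together with Lemma~\ref{eigenvalue} (giving $|e|_{\mathrm{Mat}}=\max_i(|f(\xi^i)|+|g(\xi^i)|)$ for $e=f(\mathfrak{r})+\mathfrak{s}g(\mathfrak{r})$), shows that $e''=(e/q)z+e_0\bmod\mathbf{R}$ is within statistical distance $O(\epsilon)$ of a Gaussian each of whose widths, in the basis diagonalizing the regular representation, is at most $\sqrt{r^2|\mathfrak{h}|_{\mathrm{Mat}}^2+(\alpha/\sqrt2)^2}\le\alpha$; that is, $e''\in\Psi_{\le\alpha}$. Feeding polynomially many pairs $(a,b)$ to the GR-LWE$_{q,\Psi_{\le\alpha}}$ oracle returns $s$, hence $x\bmod qI^{-1}$, which solves the $(q,d)$-BDD instance and, by the first paragraph, the original BDD instance.

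I expect the main obstacle to be the error analysis of the third paragraph: unlike the canonical-embedding analysis of~\cite{LPR10}, in the coefficient embedding the error $ez$ is genuinely elliptical and becomes a product Gaussian only after the representation-diagonalizing unitary, so one must ensure that the family $\Psi_{\le\alpha}$ named in the definition of GR-LWE is exactly the family the reduction lands in --- this is what forces the matrix-norm formulation and Corollary~\ref{ErrorInReducion}. The secondary delicate point is the module bookkeeping of the second paragraph --- the existence, mutual compatibility, and polynomial-time computability of $\phi$ and of $x\mapsto c_x$ --- where non-commutativity and the failure of $\mathbf{R}_q$ to be a principal ideal ring force the restriction to invertible ideals; given invertibility it is routine.
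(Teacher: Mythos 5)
Your proposal follows essentially the same route as the paper's own sketch: sample $v\leftarrow DGS_{I,r}$, form $a$ from $v\bmod qI$ identified with $\mathbf{R}_q$ via invertibility, set $b=(yv)/q+e_0$, invoke Corollary~\ref{ErrorInReducion} to show the combined error lies (up to statistical distance $4\epsilon$, given the smoothing hypothesis and $|\mathfrak{h}|_{\rm Mat}\leq \alpha q/\sqrt{2}r$) within $\Psi_{\leq\alpha}$, call the oracle, and pull the secret back to recover $x\pmod{qI^{-1}}$. The only differences are cosmetic --- you use an abstract right-module isomorphism $\phi$ and an anti-automorphism to fix the side of the secret, and you make the BDD-to-$(q,d)$-BDD step explicit, where the paper instead uses the inclusion-induced maps $I\rightarrow\mathbf{R}/(q\mathbf{R})$ and $I^{-1}\rightarrow\mathbf{R}/(q\mathbf{R})$ under the assumption $q\nmid\det(I)$ --- and the loose ends you flag (diagonality of the error only after a unitary base change, the module bookkeeping) are present in the paper's sketch as well.
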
	

\begin{proof}[Skecth of Proof]
	Suppose $ y=x+\mathfrak{h}\in \mathfrak{h} +I^{-1} $, where the error $\mathfrak{h}$ has matrix-norm $\leq q/\sqrt{2}r$. We want to recover $x$. We sample a $ v \in I $ according to the Gaussian distribution $DGS_{I,r}$, and let $ a = \phi_1 (v) \pmod{q
		{\mathbf R}} \in {\mathbf R}/(q {\mathbf R}) $, where $\phi_1 $ is the
	inclusion $ I \rightarrow {\mathbf R} $, which is also a left ${\mathbf
		R}$-module homomorphism. Note that $ q {\mathbf R} $ is a two-sided ideal, $
	{\mathbf R} /q {\mathbf R}$ is a direct summand of the ring $ \F_q [D_{2n}] $.
	Since $ det(I)$ is not divisible by $q$, $\phi_1$ induces a natural left
	${\mathbf R}$-module surjective homomorphism $ I \rightarrow {\mathbf R} / (q
	{\mathbf R})$. We then calculate $ b= y v + e$ (in $ \R_{\mathbf R} $),
	where $e$ is a Gaussian $\chi_{\alpha/\sqrt{2}}$ on $\R_{\mathbf R}$.
	We have $ b \equiv y v = x v+ \mathfrak{h} v+e \pmod{q {\mathbf R}}$, where
	$ x v \in {\mathbf R} $ and the distribution of $ \mathfrak{h} v+e $ has statistic distance within $4\epsilon$ to the Gaussian $\prod_i \chi_{\sqrt{(r/q)^2(|f(\xi^i)|+ |g(\xi^i)|)^2+(\alpha/\sqrt{2})^2}}\times \prod_i \chi_{\sqrt{(r/q)^2(|f(\xi^i)|- |g(\xi^i)|)^2+(\alpha/\sqrt{2})^2}}$ by Corollary~\ref{ErrorInReducion}. 
	We generate several
	instances of $ (a,b) $, and send them to the GR-LWE$_{q,\Phi_{\leq \alpha}}$ oracle. 
	Then the oracle answers $ s $ in $ {\mathbf R}/q {\mathbf R}$, as long as 
	\[
	    \sqrt{(r/q)^2|\mathfrak{h}|_{\rm Mat}^2+(\alpha/\sqrt{2})^2}\leq \alpha, \mbox{\,or\,}|\mathfrak{h}|_{\rm Mat}\leq \alpha q/\sqrt{2}r.
	\]
	Let
	$\phi_2 $ be the inclusion $ {\mathbf R} \rightarrow I^{-1} $, which is also a
	right ${\mathbf R}$-module homomorphism. It induces a natural right module
	homomorphism $ I^{-1} \rightarrow  {\mathbf R}/ (q {\mathbf R}) $,
	since $ q \nmid det(I) $. So pulling $s$ back along the homomorphism
	gives us the residue class of $x \pmod{q
		I^{-1}} $.

\end{proof}

\begin{lemma}\label{GaussBall}
If $\mathfrak{h}=f(\mathfrak{r})+\mathfrak{s}g(\mathfrak{r})\in \mathbf{R}_\mathbb{R}$ is
taken from the Gaussian distribution $\chi_\sigma$, then $\mathfrak{h}$ has matrix-norm at
most $\sigma \sqrt{n} \omega(\sqrt{\log n})$ except with negligible probability.
\end{lemma}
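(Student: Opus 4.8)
The plan is to reduce everything, via Lemma~\ref{eigenvalue}, to a concentration statement about the $n/2$ complex numbers $f(\zeta)$ and $g(\zeta)$ as $\zeta$ ranges over the roots of $x^{n/2}+1$. Specializing Lemma~\ref{eigenvalue} to the quotient $\mathbf{R}_\mathbb{R}$ (whose simple components are indexed by exactly those $n/2$ roots), the eigenvalues of $A(\mathfrak{h})A(\mathfrak{h})^T$ are the numbers $(|f(\zeta)|\pm|g(\zeta)|)^2$, so
\[
  |\mathfrak{h}|_{\rm Mat}\ \le\ \max_{\zeta^{n/2}=-1}\big(|f(\zeta)|+|g(\zeta)|\big).
\]
Hence it suffices to show that, except with negligible probability, $|f(\zeta)|$ and $|g(\zeta)|$ are all at most $\frac12\sigma\sqrt{n}\,w$ with $w=\omega(\sqrt{\log n})$. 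Since there are only $O(n)$ roots, a union bound over them is cheap, and the whole argument comes down to one per-root tail estimate.

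First I would fix a root $\zeta=e^{\sqrt{-1}\theta}$ and write $\mathbf{a}=(a_0,\dots,a_{n/2-1})$ for the coefficient vector of $f$; by definition of $\chi_\sigma$ on $\mathbf{R}_\mathbb{R}\cong\mathbb{R}^n$, this $\mathbf{a}$ is a spherical Gaussian in $\mathbb{R}^{n/2}$ each of whose coordinates has variance $\sigma^2/2\pi$. Then $\mathrm{Re}\,f(\zeta)=\langle\mathbf{a},\mathbf{u}\rangle$ and $\mathrm{Im}\,f(\zeta)=\langle\mathbf{a},\mathbf{w}\rangle$ with $\mathbf{u}=(\cos i\theta)_i$, $\mathbf{w}=(\sin i\theta)_i$, and $\|\mathbf{u}\|_2^2,\|\mathbf{w}\|_2^2\le n/2$; therefore $\mathrm{Re}\,f(\zeta)$ and $\mathrm{Im}\,f(\zeta)$ are centered one-dimensional Gaussians of variance at most $\sigma^2 n/4\pi$, and $|f(\zeta)|^2=(\mathrm{Re}\,f(\zeta))^2+(\mathrm{Im}\,f(\zeta))^2$. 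The one point to get right --- and the only place where the bound could be lost --- is to exploit this one-dimensional concentration rather than the lossy Cauchy--Schwarz estimate $|f(\zeta)|\le\|\mathbf{a}\|_2\sqrt{n/2}$, which would cost an extra factor $\sqrt{n}$ and destroy the claimed bound; beyond that I expect no genuine obstacle.

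Then I would finish with the standard Gaussian tail bound $\Pr[|N(0,V)|>t]\le 2e^{-t^2/2V}$: with $V\le\sigma^2 n/4\pi$, the event $|f(\zeta)|>\frac12\sigma\sqrt{n}\,w$ forces $|\mathrm{Re}\,f(\zeta)|$ or $|\mathrm{Im}\,f(\zeta)|$ to exceed $\frac12\sigma\sqrt{n}\,w/\sqrt2$, which happens with probability at most $4\exp(-\Omega(w^2))$, and likewise with $g$ in place of $f$. Taking $w=\omega(\sqrt{\log n})$ makes each such probability $n^{-\omega(1)}$; a union bound over the $n/2$ roots $\zeta$ and over $f$ and $g$ keeps the total failure probability $n^{-\omega(1)}$, i.e.\ negligible. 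On the complementary event, $|\mathfrak{h}|_{\rm Mat}\le\max_\zeta(|f(\zeta)|+|g(\zeta)|)\le\sigma\sqrt{n}\,w$, which is exactly the assertion. Apart from keeping track of the width-versus-variance normalization in the definition of $\chi_\sigma$, the computation is routine.
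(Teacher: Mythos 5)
Your proposal is correct and follows essentially the same route as the paper: invoke Lemma~\ref{eigenvalue} to bound $|\mathfrak{h}|_{\rm Mat}$ by $\max(|f(\zeta)|+|g(\zeta)|)$ over the relevant roots of unity, observe that the real and imaginary parts of $f(\zeta)$ and $g(\zeta)$ are one-dimensional Gaussians of width $O(\sqrt{n}\,\sigma)$ (since the cosine/sine coefficient vectors have squared norm $O(n)$), and conclude with a standard Gaussian tail estimate plus a union bound. Your write-up is only marginally more explicit than the paper's (spelling out the tail bound and the union bound over the $O(n)$ roots and over $f$ and $g$), so no substantive difference to report.
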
	

\begin{proof}
Let $\theta=2\pi/n$ and $\xi=e^{\theta\sqrt{-1}}$. By Lemma~\ref{eigenvalue}, the eigenvaules of $A(\mathfrak{h})A(\mathfrak{h})^T$ is contained in $\{(|f(\xi^i)|\pm |g(\xi^i)|)^2\,|\,i\neq 0,n/2\}$ as $\xi^0=1,\xi^{n/2}=-1$ appear in the one dimensional irreducible representations. So 
\[
  |\mathfrak{h}|_{\rm Mat}\leq \max_{i=1}^{n/2-1}\{|f(\xi^i)|+ |g(\xi^i)|\}.
\]

Next, we give an upper bound for $|f(\xi^i)|$ and $|g(\xi^i)|$ for any $i=1,2,\cdots,n/2-1$. We can rewrite 
\[
   |f(\xi^i)|=\sqrt{\left(\sum_{j=0}^{n/2-1}a_j\cos(j i \theta)\right)^2+\left(\sum_{j=0}^{n/2-1}a_j\sin(j i \theta)\right)^2}.
\]
Since $a_0,a_1,\cdots,a_{n/2-1}$ are independently distributed from Gaussian $\chi_\sigma$, the sum $\sum_{j=0}^{n/2-1}\cos(j i \theta)a_j$ is Gaussian $\chi_{\sqrt{\sum_{j=0}^{n/2-1}\cos^2(j i \theta)}\cdot\sigma}$. Because $i=1,2,\cdots,n/2-1$, we have
\[
\sum_{j=0}^{n/2-1}\cos^2(j i \theta)=\frac{n}{2}+\frac{1}{2}\sum_{j=0}^{n/2-1}\cos(j 2 i \theta)=\frac{n}{2}+\frac{1}{2}{\rm Re}(\sum_{j=0}^{n/2-1}e^{j 2 i \theta\sqrt{-1}})=\frac{n}{2}.
\]	
So the sum $\sum_{j=0}^{n/2-1}\cos(j i \theta)a_j$ is one dimensional Gaussian $\chi_{\sqrt{n/2}\cdot\sigma}$. It is well-known that a sample from  $\chi_{\sqrt{n/2}\cdot\sigma}$ has length at most $\omega(\sqrt{\log n})\sqrt{n}\cdot\sigma$ except with negligible probability. Similarly, the sum $\sum_{j=0}^{n/2-1}a_j\sin(j i \theta)$ is bounded by $\omega(\sqrt{\log n})\sqrt{n}\cdot\sigma$ except with negligible probability. And hence,  $|f(\xi^i)|$ is bounded by $\omega(\sqrt{\log n})\sqrt{n}\cdot\sigma$ except with negligible probability. By the same reason, $|g(\xi^i)|$ is bounded by $\omega(\sqrt{\log n})\sqrt{n}\cdot\sigma$ except with negligible probability. Then the lemma is proved.

\end{proof}	

The second (quantum) part of the iteration algorithm in~\cite{Regev05} was
improved by~\cite{LPR10} using BDD for error distributed from a Gaussian. By the
above lemma,  samples from a Gaussian $\chi_{d/\sqrt{2n}}$ are distributed in
the ball $B_{d\omega(\sqrt{\log n})}$ under the matrix norm except with a negligible probability. So it is enough to have a BDD oracle which can solve errors of matrix-norm $\leq d\omega(\sqrt{\log n})$.

\begin{lemma}\label{iter2}[Second part of iteration]
	There is an efficient quantum algorithm that, given any $n$-dimensional lattice $\Lambda$, a number $d<\lambda_1(\Lambda^{*})/2$ (here, $\lambda_1$ is under Euclidean norm), and an oracle that solves BDD$_{\Lambda^*,d\omega(\sqrt{\log n})}$ in matrix-norm, outputs a sample from $DGS_{\Lambda, \sqrt{n}/d}$.
\end{lemma}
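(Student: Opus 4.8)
The plan is to run the quantum part of Regev's reduction from discrete Gaussian sampling to bounded-distance decoding~\cite{Regev05}, in the form refined by Lyubashevsky--Peikert--Regev~\cite{LPR10} (decoding with Gaussian-distributed error), changing only one thing: the decoding oracle on $\Lambda^*$ is invoked in the matrix norm $|\cdot|_{\rm Mat}$ rather than the Euclidean norm. Recall the shape of that step. One first prepares, on a sufficiently fine grid of $\mathbb{R}^n$ (truncated to a sufficiently large ball, and working modulo an auxiliary scaled lattice $q'\Lambda^*$ so that all registers are finite), a superposition over points close to $\Lambda^*$ whose amplitude at $x+e$, $x\in\Lambda^*$, is the product of a Gaussian of parameter $\Theta(d/\sqrt n)$ in the dual-lattice point $x$ and a narrow Gaussian of parameter $\Theta(d/\sqrt n)$ in the offset $e$; all discretization, truncation and modular errors here are negligible by the standard tail and smoothing bounds of~\cite{Regev05,LPR10}, using $d<\lambda_1(\Lambda^*)/2$. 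One then queries the decoding oracle on the perturbed register to obtain $x \bmod q'\Lambda^*$ in an ancilla, and applies the quantum Fourier transform of~\cite{Regev05,LPR10}; the net effect is a state supported on $\Lambda$ with amplitudes proportional to $\rho_{\sqrt n/d}(\cdot)$, so that a final measurement returns a sample from $\mathrm{DGS}_{\Lambda,\sqrt n/d}$.

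Correctness splits into two largely independent parts. The first is the Fourier-analytic (Poisson-summation) computation that the measured distribution is within negligible statistical distance of $\mathrm{DGS}_{\Lambda,\sqrt n/d}$; this is identical to the Euclidean analysis of~\cite{Regev05,LPR10}, as it uses only $d<\lambda_1(\Lambda^*)/2$ --- so that the balls around distinct cosets of $\Lambda^*$ are disjoint and the oracle's intended output is well defined --- together with the smoothing estimate this implies, and nothing about it changes for group-ring lattices. The second part, where the matrix norm enters, is the requirement that the decoding oracle returns the correct dual-lattice point on all but a negligible-weight portion of the superposition. By construction the offset $e$ presented to the oracle is a fine discretization of a spherical Gaussian $\chi_{d/\sqrt{2n}}$, and by Lemma~\ref{GaussBall} such an $e$ satisfies $|e|_{\rm Mat}\le \frac{d}{\sqrt{2n}}\sqrt n\,\omega(\sqrt{\log n})=O(d\,\omega(\sqrt{\log n}))$ except with negligible probability; hence the assumed $\mathrm{BDD}_{\Lambda^*,\,d\,\omega(\sqrt{\log n})}$ oracle in the matrix norm succeeds with overwhelming amplitude. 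Since $\|e\|\le|e|_{\rm Mat}$ for every $e$, the same bound keeps $e$ inside the Euclidean radius $\lambda_1(\Lambda^*)/2$ that makes the target coset unique, so the two parts are compatible (for $d$ with $d\,\omega(\sqrt{\log n})<\lambda_1(\Lambda^*)/2$, which is the intended reading of the hypothesis).

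The hard part is exactly this control of $|e|_{\rm Mat}$ --- the reason Lemma~\ref{GaussBall}, and behind it Lemma~\ref{eigenvalue}, were proved beforehand. A priori the operator norm of the regular representation of a random element of $\mathbf{R}_\mathbb{R}$ can exceed its Euclidean norm by a factor close to $\sqrt n$, which would force the oracle to decode a $\sqrt n$ times larger radius and ruin the approximation factor. Lemma~\ref{GaussBall} shows that for a spherical Gaussian the blow-up is only $\omega(\sqrt{\log n})$, because each coordinate $|f(\xi^i)|$, $|g(\xi^i)|$ of the block-diagonalized representation is itself a one-dimensional Gaussian of parameter $\sqrt{n/2}\,\sigma$, whose tail is $\omega(\sqrt{\log n})\sqrt n\,\sigma$. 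The remainder is routine: making the grid fine enough and the truncation ball large enough that all error terms stay negligible, handling the auxiliary modulus $q'$, and noting that the decoding oracle may be applied coherently to a superposition by keeping its input and computing its answer into a fresh register, so that its intended action is unitary off a negligible-weight subspace. I do not expect obstacles beyond making the inequality $d\,\omega(\sqrt{\log n})<\lambda_1(\Lambda^*)/2$ explicit and propagating it through the error analysis.
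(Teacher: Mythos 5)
Your proposal is correct and takes essentially the same route as the paper: the paper's (implicit) justification of Lemma~\ref{iter2} is precisely to invoke the quantum step of Regev's reduction in the Gaussian-error form of~\cite{LPR10} and to note, via Lemma~\ref{GaussBall}, that an error drawn from $\chi_{d/\sqrt{2n}}$ has matrix-norm at most $d\,\omega(\sqrt{\log n})$ except with negligible probability, so the matrix-norm BDD oracle answers correctly on all but a negligible-weight part of the superposition. Your extra observations (that $\|e\|\leq |e|_{\rm Mat}$, and that uniqueness really wants $d\,\omega(\sqrt{\log n})<\lambda_1(\Lambda^*)/2$) simply make explicit details the paper leaves implicit.
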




  
\section{Conclusion}\label{sec:conclusion}
We  propose generating LWE instances from non-commutative group rings and 
illustrate the approach by presenting a public key
scheme based on  dihedral group rings. We believe that
LWE on dihedral group rings achieves the right trade-off
between security and efficiency.
As with the original LWE and ring-LWE, we hope that the new approach
is a versatile primitive, so we can build various cryptographic schemes  
based on this primitive besides  public-key encryption.
There is one open problem that we find very interesting: Can we
generalize the approach to other
non-commutative groups and keep the efficiency of ring-LWE? 

\bibliographystyle{plain}
\bibliography{non-commutative}

\begin{thebibliography}{10}

\bibitem{AgrawalBB10b}
Shweta Agrawal, Dan Boneh, and Xavier Boyen.
\newblock Efficient lattice {(H)IBE} in the standard model.
\newblock In {\em Advances in Cryptology - {EUROCRYPT} 2010}, pages 553--572,
  2010.

\bibitem{AgrawalBB10a}
Shweta Agrawal, Dan Boneh, and Xavier Boyen.
\newblock Lattice basis delegation in fixed dimension and shorter-ciphertext
  hierarchical {IBE}.
\newblock In {\em Advances in Cryptology - {CRYPTO} 2010}, pages 98--115, 2010.

\bibitem{Ajtai96}
Mikl{\'{o}}s Ajtai.
\newblock Generating hard instances of lattice problems (extended abstract).
\newblock In {\em Proceedings of the Twenty-Eighth Annual {ACM} Symposium on
  the Theory of Computing - STOC}, pages 99--108, 1996.

\bibitem{BCDMNNRS16}
Joppe~W. Bos, Craig Costello, L{\'{e}}o Ducas, Ilya Mironov, Michael Naehrig,
  Valeria Nikolaenko, Ananth Raghunathan, and Douglas Stebila.
\newblock Frodo: Take off the ring! practical, quantum-secure key exchange from
  {LWE}.
\newblock In {\em Proceedings of the 2016 {ACM} {SIGSAC} Conference on Computer
  and Communications Security}, pages 1006--1018, 2016.

\bibitem{BrakerskiGV12}
Zvika Brakerski, Craig Gentry, and Vinod Vaikuntanathan.
\newblock (leveled) fully homomorphic encryption without bootstrapping.
\newblock In {\em Innovations in Theoretical Computer Science - {ICTS}}, pages
  309--325, 2012.

\bibitem{BrakerskiV11a}
Zvika Brakerski and Vinod Vaikuntanathan.
\newblock Efficient fully homomorphic encryption from (standard) {LWE}.
\newblock In {\em {IEEE} 52nd Annual Symposium on Foundations of Computer
  Science, {FOCS}}, pages 97--106, 2011.

\bibitem{BrakerskiV11b}
Zvika Brakerski and Vinod Vaikuntanathan.
\newblock Fully homomorphic encryption from ring-lwe and security for key
  dependent messages.
\newblock In {\em Advances in Cryptology - {CRYPTO} 2011}, pages 505--524,
  2011.

\bibitem{CashHKP10}
David Cash, Dennis Hofheinz, Eike Kiltz, and Chris Peikert.
\newblock Bonsai trees, or how to delegate a lattice basis.
\newblock In {\em Advances in Cryptology - {EUROCRYPT} 2010}, pages 523--552,
  2010.

\bibitem{CLS15}
Hao Chen, Kristin~E. Lauter, and Katherine~E. Stange.
\newblock Attacks on search {RLWE}.
\newblock Cryptology ePrint Archive, Report 2015/971, 2015.

\bibitem{CLS16}
Hao Chen, Kristin~E. Lauter, and Katherine~E. Stange.
\newblock Vulnerable galois {RLWE} families and improved attacks.
\newblock Cryptology ePrint Archive, Report 2016/193, 2016.

\bibitem{Coppersmith97}
D.~Coppersmith.
\newblock Attacking non-commutative ntru.
\newblock IBM Research Report, 1997.

\bibitem{CDPR16}
Ronald Cramer, L{\'{e}}o Ducas, Chris Peikert, and Oded Regev.
\newblock Recovering short generators of principal ideals in cyclotomic rings.
\newblock In {\em Advances in Cryptology - {EUROCRYPT} 2016}, pages 559--585,
  2016.

\bibitem{CDW16}
Ronald Cramer, L{\'{e}}o Ducas, and Benjamin Wesolowski.
\newblock Short stickelberger class relations and application to ideal-svp.
\newblock Cryptology ePrint Archive, Report 2016/885, 2016.

\bibitem{Ding12}
Jintai Ding.
\newblock New cryptographic constructions using generalized learning with
  errors problem.
\newblock Cryptology ePrint Archive, Report 2012/387, 2012.

\bibitem{EHL14}
Kirsten Eisentr{\"{a}}ger, Sean Hallgren, and Kristin~E. Lauter.
\newblock Weak instances of {PLWE}.
\newblock In {\em Selected Areas in Cryptography - {SAC} 2014}, pages 183--194,
  2014.

\bibitem{ELOS15}
Yara Elias, Kristin~E. Lauter, Ekin Ozman, and Katherine~E. Stange.
\newblock Provably weak instances of ring-lwe.
\newblock In {\em Advances in Cryptology - {CRYPTO} 2015}, pages 63--92, 2015.

\bibitem{FV12}
Junfeng Fan and Frederik Vercauteren.
\newblock Somewhat practical fully homomorphic encryption.
\newblock Cryptology ePrint Archive, Report 2012/144, 2012.

\bibitem{GHV10}
Craig Gentry, Shai Halevi, and Vinod Vaikuntanathan.
\newblock A simple bgn-type cryptosystem from {LWE}.
\newblock In {\em Advances in Cryptology - {EUROCRYPT} 2010}, pages 506--522,
  2010.

\bibitem{GentryPV08}
Craig Gentry, Chris Peikert, and Vinod Vaikuntanathan.
\newblock Trapdoors for hard lattices and new cryptographic constructions.
\newblock In {\em Proceedings of the 40th Annual {ACM} Symposium on Theory of
  Computing}, pages 197--206, 2008.

\bibitem{HoffsteinPS98}
Jeffrey Hoffstein, Jill Pipher, and Joseph~H. Silverman.
\newblock {NTRU:} {A} ring-based public key cryptosystem.
\newblock In {\em Algorithmic Number Theory, Third International Symposium,
  ANTS-III}, pages 267--288, 1998.

\bibitem{KTX07}
Akinori Kawachi, Keisuke Tanaka, and Keita Xagawa.
\newblock Multi-bit cryptosystems based on lattice problems.
\newblock In {\em Public Key Cryptography - {PKC}}, volume 4450 of {\em Lecture
  Notes in Computer Science}, pages 315--329. Springer, 2007.

\bibitem{LindnerP11}
Richard Lindner and Chris Peikert.
\newblock Better key sizes (and attacks) for lwe-based encryption.
\newblock In {\em Topics in Cryptology - {CT-RSA} 2011}, pages 319--339, 2011.

\bibitem{LP11}
Richard Lindner and Chris Peikert.
\newblock Better key sizes (and attacks) for lwe-based encryption.
\newblock In {\em Topics in Cryptology - {CT-RSA} 2011}, pages 319--339, 2011.

\bibitem{LPR10}
Vadim Lyubashevsky, Chris Peikert, and Oded Regev.
\newblock On ideal lattices and learning with errors over rings.
\newblock In {\em Advances in Cryptology - {EUROCRYPT}}, volume 6110 of {\em
  Lecture Notes in Computer Science}, pages 1--23. Springer, 2010.

\bibitem{MicciancioP12}
Daniele Micciancio and Chris Peikert.
\newblock Trapdoors for lattices: Simpler, tighter, faster, smaller.
\newblock In {\em Advances in Cryptology - {EUROCRYPT} 2012}, pages 700--718,
  2012.

\bibitem{MSU11}
Alexei~G. Myasnikov, Vladimir Shpilrain, and Alexander Ushakov.
\newblock {\em Non-commutative Cryptography and Complexity of Group-theoretic
  Problems}.
\newblock American Mathematical Society, 2011.

\bibitem{Peikert09}
Chris Peikert.
\newblock Public-key cryptosystems from the worst-case shortest vector problem:
  extended abstract.
\newblock In {\em Proceedings of the 41st Annual {ACM} Symposium on Theory of
  Computing, {STOC} 2009}, pages 333--342. {ACM}, 2009.

\bibitem{Peikert16}
Chris Peikert.
\newblock How (not) to instantiate ring-lwe.
\newblock In {\em Security and Cryptography for Networks - 10th International
  Conference, {SCN} 2016}, pages 411--430, 2016.

\bibitem{PRS2017}
Chris Peikert, Oded Regev, and Noah Stephens-Davidowitz.
\newblock Pseudorandomness of ring-lwe for any ring and modulus.
\newblock {\em IACR Cryptology ePrint Archive}, 2017:258, 2017.

\bibitem{PVW08}
Chris Peikert, Vinod Vaikuntanathan, and Brent Waters.
\newblock A framework for efficient and composable oblivious transfer.
\newblock In {\em Advances in Cryptology - {CRYPTO} 2008}, volume 5157 of {\em
  Lecture Notes in Computer Science}, pages 554--571. Springer, 2008.

\bibitem{PeikertW08}
Chris Peikert and Brent Waters.
\newblock Lossy trapdoor functions and their applications.
\newblock In {\em Proceedings of the 40th Annual {ACM} Symposium on Theory of
  Computing}, pages 187--196, 2008.

\bibitem{Regev05}
Oded Regev.
\newblock On lattices, learning with errors, random linear codes, and
  cryptography.
\newblock {\em Journal of the ACM}, 56(6):34, 2009.
\newblock Preliminary version in STOC'05.

\bibitem{Sehgal93}
Sudarshan~K Sehgal.
\newblock {\em Units in integral group rings}.
\newblock Longman, 1993.

\bibitem{Shor94}
Peter~W. Shor.
\newblock Algorithms for quantum computation: Discrete logarithms and
  factoring.
\newblock In {\em 35th Annual Symposium on Foundations of Computer Science -
  FOCS}, pages 124--134, 1994.

\bibitem{StehleS11}
Damien Stehl{\'{e}} and Ron Steinfeld.
\newblock Making {NTRU} as secure as worst-case problems over ideal lattices.
\newblock In {\em Advances in Cryptology - {EUROCRYPT} 2011}, pages 27--47,
  2011.

\bibitem{Truman07}
K.~R. Truman.
\newblock {\em Analysis and extension of non-commutative NTRU}.
\newblock PhD thesis, University of Maryland, 2007.

\bibitem{YDS15}
Takanori Yasuda, Xavier Dahan, and Kouichi Sakurai.
\newblock Characterizing ntru-variants using group ring and evaluating their
  lattice security.
\newblock Cryptology ePrint Archive, Report 2015/1170, 2015.

\end{thebibliography}
\end{document}